\newtheorem{definition}{Definition}
\newtheorem{remark}{Remark}
\newcommand{\expectation}{\ensuremath{\mathbb{E}}}
\newcommand{\Expt}{\expectation}
\newcommand{\probability}{\ensuremath{\mathbb{P}}}
\newcommand{\Prob}{\probability}
\begin{document}

\title{Extremality for Gallager's Reliability Function $E_0$}
\author{\IEEEauthorblockN{Mine Alsan}\\
\small\IEEEauthorblockA{Information Theory Laboratory\\
Ecole Polytechnique F\' ed\' erale de Lausanne\\
CH-1015 Lausanne, Switzerland\\
Email: mine.alsan@epfl.ch}
\normalsize}
\maketitle
\pagestyle{plain}
\thispagestyle{plain}

\maketitle

\let\thefootnote\relax\footnotetext{Part of the material in this paper was presented in part at the IEEE International Symposium on Information Theory, Boston, USA,
July 2012.}

\begin{abstract}
We describe certain extremalities for Gallager's $E_0$ function evaluated under the uniform input distribution for binary input discrete memoryless channels. The results characterize the extremality of the $E_0(\rho)$ curves of the binary erasure channel and the binary symmetric channel among all the $E_0(\rho)$ curves that can be generated by the class of binary discrete memoryless channels whose $E_0(\rho)$ curves pass through a given point $(\rho_0, e_0)$, for some $\rho_0 > -1$.
\end{abstract}

\begin{IEEEkeywords}
Channel reliability function, random coding exponent, extremal channels.
\end{IEEEkeywords}



%

\section{Introduction}
While the capacity of a memoryless channel $W$ gives the largest
rate for which reliable communication is possible, the reliability function
$E(R,W)$ provides a finer measure on the quality of the channel: for any $R$ less than the channel capacity, it is possible to find a sequence of
codes of increasing blocklength, each of which of rate at least $R$, and
whose block error probability decays exponentially to zero as the
blocklength increases ---  $E(R,W)$ is the largest possible rate of this decay.

Gallager's classical treatise~\cite{578869} gives a lower bound to $E(R,W)$,
the random coding exponent $E_r(R,W)$ in the form $E_r(R,W)=\max_{\rho\in[0,1]}
E_0(\rho,W)-\rho R$.  Remarkably, this lower bound is tight for rates above
the critical rate $E_0'(1,W)$.  The function $E_0(\rho,W)$ that appears
as an auxiliary function on the road to deriving $E_r(R,W)$ turns out to
be of independent interest in its own right.  In particular, $E_0(\rho,W)/\rho$
is the largest rate for which a sequential decoder can operate while keeping
the $\rho$-th moment of the decoder's computation effort per symbol bounded \cite{481781}.

Previously, we investigated in \cite{6284065} the extremal properties of $E_0(\rho,W)$, for $\rho\in[0, 1]$, for the class of binary input discrete memoryless channels (B-DMC) when the function is evaluated under the uniform input distribution. We have shown that among all channels with a given value of $E_0(\rho_1,W)$, the binary erasure channel (BEC) and the binary symmetric channel (BSC) distinguish themselves in certain ways: they have, respectively, the largest and the smallest value of $E_0'(\rho_2,W)$ 
for any $\rho_1, \rho_2\in[0, 1]$ such that $\rho_2\geq\rho_1$. 
As the random coding exponent is obtained by tracing the map $\rho\to(E_0'(\rho), E_0(\rho)-\rho E_0'(\rho))$, among the simple corollaries of this is the conclusion that of all the symmetric channels with the
same capacity, the BEC and the BSC have the largest and the smallest value of $E_r(R, W)$, a result reported in \cite{6034105}. 

In this paper, we extend the previous extremality analysis of \cite{6284065} to
both the cases when $\rho > 1$ and when $\rho\in(-1, 0)$. The extremal results for $E_0$ in these regions 
are motivated by various error exponents such as the list decoding exponent~\cite{578869}, defined for $\rho > 0$, 
and the exponent which appears in Arimoto's lower bound for the strong converse of the coding theorem \cite{1055007}, defined for $\rho\in(-1, 0]$. 
For a concise list of the definitions of various error exponents involving the $E_0$ function, we refer to \cite{6377299}, 
a recent study which also examined the extremality of $E_0(\rho)$ for $\rho>-1$, but only for the special class of symmetric B-DMCs of the same capacity. 

The results of this paper characterize the extremality of the $E_0$ curves of the BEC and the BSC among all
the $E_0$ curves that can be generated by the class of B-DMCs whose $E_0$ curves pass through a given point $(\rho_0, e_0)$ for some $\rho_0 > -1$. 
We prove that when $\rho_0\in(-1,1]$, these two channels remain extremal along the $E_0(\rho)$ curves for any $\rho > -1$.
We also prove that when $\rho_0 > 1$, while these two channels are extremal along the $E_0(\rho)$ curves for any $\rho\in(-1, 1]$, no extremality beyond $\rho > 1$ can be formulated in general. Furthermore, we show that the conclusion we have mentioned above for $E_0'$ is still valid when $\rho_1\in(0, 1]$ and $\rho_2 \geq \rho_1$ (even for $\rho_2 > 1$), and also when $\rho_1\in(-1, 0]$ and $\rho_2 \leq \rho_1$. Using these, we recover the result of \cite{6377299} which shows that, for any $\rho > -1$, the BEC and the BSC are $E_0$ extremal among the $E_0(\rho)$ curves of all symmetric channels with the same capacity.

The rest of this paper is organized as follows. Section \ref{sec:pre} starts by giving the preliminary definitions, 
and then later derives some basic properties of the $E_0$ curves of BECs and BSCs. 
Subsequently, in Section \ref{sec:ER}, the main results of this paper are stated in Theorem \ref{thm:Extremality_Results}. 
The section follows by some convexity lemmas, the proof of the theorem, and a graphical interpretation of the extremality results. Finally, the last section gives the conclusions.

\section{Preliminaries}\label{sec:pre}
\subsection{Definition of the Random Coding Exponent and $E_0$}
\begin{definition}~\cite[Section 5.6]{578869} 
Given a discrete memoryless channel (DMC) $W$ with input alphabet $\mathcal{X}$
and output alphabet $\mathcal{Y}$, fix a distribution $Q$ on its input
alphabet. Consider the function $E_r(R, Q, W)$ defined as
\begin{equation}\label{eq::Er_R_Q_W}
E_r(R, Q, W) = \displaystyle\max_{\rho\in[0, 1]} \{ E_0(\rho, Q, W) - \rho R \},
\end{equation} 
for $R\geq0$, where
\begin{equation} \label{eq:EoQ}  
 E_0(\rho, Q, W) = -\log \displaystyle\sum_{y\in \mathcal{Y}} \left[ \displaystyle\sum_{x\in \mathcal{X}} Q(x) W(y\mid x)^{\frac{1}{1+\rho}}\right]^{1+\rho},
\end{equation}
with the $\log$ denoting the natural logarithm to the base e.
The random coding exponent of the channel is defined as
\begin{equation} \label{eq:Er}
 E_r(R,W) = \displaystyle\max_{Q}  E_r(R, Q, W).  
\end{equation}  
\end{definition}

Throughout this paper, we fix $\mathcal{X}$ to $\{0, 1\}$ and $Q$ to the uniform input distribution. Then, the expression in \eqref{eq:EoQ} becomes 
\begin{equation} \label{eq:E0} 
E_0(\rho, W) = -\log \displaystyle\sum_{y\in \mathcal{Y}} \left[ \frac{1}{2} W(y\mid 0)^{\frac{1}{1+\rho}} + \frac{1}{2} W(y\mid 1)^{\frac{1}{1+\rho}} \right]^{1+\rho}.   
\end{equation} 
For symmetric channels, the uniform input distribution corresponds to the distribution which maximizes \eqref{eq:Er}~\cite{578869}. The random coding exponent of symmetric channels is then given by
\begin{equation}\label{eq::Er_R_W}
E_r(R, W) = \displaystyle\max_{\rho\in[0, 1]} \{ E_0(\rho, W) - \rho R \}.
\end{equation}
Moreover, the right hand side of \eqref{eq::Er_R_W} gives a lower bound to the random coding exponents of B-DMCs which are not symmetric. 

The properties of $E_0(\rho, W)$ with respect to the variable $\rho$ are summarized in  \cite[Theorem 5.6.3]{578869}. For $\rho \geq 0$, $E_0(\rho, W)$ 
is a positive, concave increasing function in $\rho$. By convexity, the maximization in the right hand side of \eqref{eq::Er_R_W} over $\rho\in[0, 1]$ can be described in terms of the following parametric equations:
\begin{align} \label{eq:parametric}  
 &R(\rho, W) = \frac{\partial}{\partial \rho} E_0(\rho, W), \\
 &E_r(\rho, W) =  E_0(\rho, W) - \rho \frac{\partial}{\partial \rho} E_0(\rho, W),  
\end{align} 
for $R$ in the range 
\begin{equation}
\displaystyle\frac{\partial E_0(\rho, W)}{\partial \rho}\Bigr\rvert_{\rho=1} \leq R \leq \displaystyle\frac{\partial E_0(\rho, W)}{\partial \rho}\Bigr\rvert_{\rho=0}.
\end{equation}

It is shown in \cite[see Figure 5.6.2]{578869} that the symmetric capacity of the channel, 
\begin{equation}
I(W) = \displaystyle\sum_{y\in\mathcal{Y}}\sum_{x\in\{0, 1\}} \frac{1}{2}W(y|x) \log{\displaystyle\frac{W(y|x)}{\frac{1}{2}W(y|0) + \frac{1}{2}W(y|1)}},
\end{equation}
is the slope of the $E_0$ curve at $\rho=0$, i.e.,  
\begin{equation}\label{eq::I_def} 
I(W) = \frac{\displaystyle \partial}{ \displaystyle \partial \rho} E_0(\rho, W)\Bigl\lvert_{\rho=0}.
\end{equation} 

Finally, another channel parameter of interest for DMCs, the cut-off rate, can also be derived from $E_0$, see \cite{10.1109/TIT.2005.862081} for more information on the significance of this parameter. The cut-off rate of a B-DMC when evaluated under the uniform input distribution is given by $E_0(1, W)$. 

\subsection{Description of $E_0$ by R\'{e}nyi's Entropy Functions}
In this section, we mention an alternative description of $E_0(\rho, W)/\rho$, which also appears in \cite{370121} and \cite{481781}, using the concept of R\'{e}nyi's entropy functions. This gives an interpretation to $E_0(\rho, W)/\rho$ as a general measure of information.

R\'{e}nyi's entropy function of order $\alpha$ of a discrete random variable $X \sim P(x)$ is defined in~\cite{6016020} as
\begin{equation}
H_{\alpha}(X) = \frac{\alpha}{1-\alpha} \log{\left( \displaystyle\sum_{x} P(x)^{\alpha}\right) ^{\frac{1}{\alpha}}}.
\end{equation}

This definition is extended to the R\'{e}nyi's conditional entropy function of order $\alpha$ of a discrete random variable $X$ given $Y$ with joint distribution $P(x, y)$ in~\cite{article02} as 
\begin{align}
\label{eq:Renyi_cond}H_{\alpha}(X \mid Y) &= \frac{\alpha}{1-\alpha} \log{\displaystyle\sum_{y}\left( \displaystyle\sum_{x} P(x,y)^{\alpha}\right) ^{\frac{1}{\alpha}}}  \\
&= H_{\alpha}(X) + \frac{\alpha}{1-\alpha} \log{\displaystyle\sum_{y}\left( \displaystyle\sum_{x} Q(x) P(y\mid x)^{\alpha}\right) ^{\frac{1}{\alpha}}}, 
\end{align}
where $Q(x) = \frac{\displaystyle P(x)^{\alpha}}{\displaystyle\sum_{x} P(x)^{\alpha}}$ is `tilted' probability distribution. Although different definitions are proposed in the literature for a possible extension of R\'{e}nyi's entropy function to a quantity similar to the conditional entropy function, as one suitable for this study, we use 
the definition in \eqref{eq:Renyi_cond}.

Taking a uniform input distribution and letting $\alpha = \frac{\displaystyle1}{\displaystyle1+\rho}$, we get
\begin{equation}\label{eq:Renyi}
H_{\frac{1}{1+\rho}}(X) = \frac{1}{\rho} \log{\left( \displaystyle\sum_{x} P(x)^{\frac{1}{1+\rho}}\right) ^{\frac{1}{1+\rho}}},  
\end{equation}
\begin{equation}\label{eq:CondiRenyi}
H_{\frac{1}{1+\rho}}(X \mid Y) = H_{\frac{1}{1+\rho}}(X) + \frac{1}{\rho} \log{\displaystyle\sum_{y}\left( \displaystyle\sum_{x} P(x) P(y\mid x)^{\frac{1}{1+\rho}}\right) ^{1+\rho}}.
\end{equation}
Hence, from the definition of $E_0(\rho, W)$ in \eqref{eq:E0}, we deduce
\begin{equation}\label{eq:Eo_over_ro}
 \frac{E_0(\rho, W)}{\rho} = H_{\frac{1}{1+\rho}}(X) - H_{\frac{1}{1+\rho}}(X \mid Y). 
\end{equation}
The quantity in the right hand side of \eqref{eq:Eo_over_ro} is called as the mutual information of order $\frac{\displaystyle1}{\displaystyle1+\rho}$
in~\cite{article02}. Moreover, the following properties are proved: 
\begin{itemize}
 \item $\displaystyle\lim_{\alpha \to 1} H_{\alpha}(X) = H(X)$,
 \item $H_{\alpha}(X \mid Y) \leq H_{\alpha}(X)$, i.e ``conditioning reduces entropy'' is valid for R\'{e}nyi's entropy function, as it is in the Shannon entropy case,
 \item $\frac{\displaystyle E_0(\rho, W)}{\displaystyle\rho}$ is a decreasing function in $\rho$ with $\lim_{\rho\to 0}\frac{\displaystyle E_0(\rho,W)}{\displaystyle \rho} = I(W)$.
\end{itemize}

\subsection{An Alternative Representation of $E_0$ for B-DMCs}
The extremality results we will prove in Section \ref{sec:ER} will be based neither on the `raw definition' of $E_0(\rho, W)$ in \eqref{eq:E0}, nor on the interpretation in terms of Renyi's entropy functions of \eqref{eq:Eo_over_ro}. Instead, we will make use of a description of $E_0(\rho, W)$ introduced by~\cite{notes1} which is more suitable for deriving extremal bounds.  

For a given symmetric B-DMC $W:\mathcal{X}\to\mathcal{Y}$ and a fixed $\rho > -1$, \cite{notes1} shows that there exists a random variable $Z$ taking values in the $[0, 1]$ interval such that 
\begin{equation}\label{eq:Eo}
 E_0( \rho, W) = -\log{\Expt\left[g(\rho, Z)\right]},   
\end{equation}
where the function $g(\rho, z)$ is defined as
\begin{equation}\label{eq:g}
 g(\rho, z) =  \left(\frac{1}{2}\left( 1 + z\right)^{\frac{1}{1+\rho}} + \frac{1}{2}\left(1 - z\right)^{\frac{1}{1+\rho}} \right)^{1+\rho},
\end{equation}
for $\rho\in\mathbf{R}\setminus\{-1\}$ and $z\in[-1, 1]$. 
To see this, define
\begin{equation}\label{eq:dist}
 W(y)=\frac{W(y\mid0) + W(y\mid1)}{2},
\end{equation}
and
\begin{equation}\label{eq:dist}
\Delta(y) = \frac{W(y\mid0)-W(y\mid1)}{W(y\mid0)+W(y\mid1)},
\end{equation}
for $y\in\mathcal{Y}$, so that  $W(y\mid0)= W(y)\left(1+\Delta(y)\right)$ and $W(y\mid1) = W(y)\left(1-\Delta(y)\right)$. Then, one can manipulate \eqref{eq:E0} to find that $Z = \lvert\Delta(Y)\rvert$ with $Y\sim W(y)$ in \eqref{eq:Eo}.

The next lemma gives the first and the second order properties of $g(\rho, z)$ with respect to the variable $z$. The proof is carried in Appendix I.

\newtheorem{lemma}{Lemma}
\begin{lemma}\label{lem:g_concavity}
The function $g(\rho, z)$ defined in \eqref{eq:g} is a concave non-increasing function in $z\in[0, 1]$ for $\rho\in(-\infty, -1)\cup[0, \infty)$, and a convex non-decreasing function in $z\in[0, 1]$ for $\rho\in(-1, 0]$. As $g(\rho, z)$ is symmetric around $z=0$, these properties also determine the function's behavior for $z\in[-1, 0]$. 
\end{lemma}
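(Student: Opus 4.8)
The plan is to prove both assertions by differentiating $g(\rho,z)$ twice with respect to $z$ and reading off the signs. Throughout, it is convenient to set $s=\frac{1}{1+\rho}$ and $h(z)=\frac{1}{2}(1+z)^{s}+\frac{1}{2}(1-z)^{s}$, so that $g(\rho,z)=h(z)^{1+\rho}$ with $h(z)>0$ on $[0,1)$. I will write $g_z$ and $g_{zz}$ for the first and second partial derivatives in $z$. The two identities $(1+\rho)s=1$ and $s-1=\frac{-\rho}{1+\rho}$ drive every sign computation below.

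First I would settle the monotonicity. Differentiating once and using $(1+\rho)s=1$ gives
\[ g_z(\rho,z)=\tfrac{1}{2}\,h(z)^{\rho}\left[(1+z)^{s-1}-(1-z)^{s-1}\right]. \]
Since $h(z)^{\rho}>0$, the sign of $g_z$ is the sign of the bracket. On $z\in[0,1]$ we have $1+z\geq 1-z\geq 0$, so the bracket has the same sign as $s-1$: nonpositive when $s\leq 1$ and nonnegative when $s\geq 1$. Because $s-1=\frac{-\rho}{1+\rho}$ is nonpositive exactly for $\rho\in(-\infty,-1)\cup[0,\infty)$ and nonnegative exactly for $\rho\in(-1,0]$, this yields the claimed non-increasing and non-decreasing behaviour, respectively.

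The second derivative carries the convexity information and is the substantive part. Writing $f_1=(1+z)^{s-1}$ and $f_2=(1-z)^{s-1}$, differentiating $g_z$ once more, and extracting the positive factor $\tfrac{1}{4}h(z)^{\rho-1}$, the sign of $g_{zz}$ coincides with the sign of
\[ \big[\rho s+(s-1)\big]\,(f_1^2+f_2^2)+\Big[-2\rho s+(s-1)\,\tfrac{2(1+z^2)}{1-z^2}\Big]\,f_1 f_2, \]
where the cross-term coefficient $\frac{2(1+z^2)}{1-z^2}$ appears after simplifying $\frac{1+z}{1-z}+\frac{1-z}{1+z}$. The key observation is the exact cancellation $\rho s+(s-1)=\frac{\rho}{1+\rho}-\frac{\rho}{1+\rho}=0$, which eliminates the $f_1^2+f_2^2$ contribution entirely. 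Collapsing what remains shows that $g_{zz}$ has the sign of $\frac{-4\rho}{(1+\rho)(1-z^2)}f_1 f_2$. Since $f_1 f_2=(1-z^2)^{s-1}>0$ and $1-z^2>0$ on $[0,1)$, the sign of $g_{zz}$ is precisely that of $\frac{-\rho}{1+\rho}$, which is negative for $\rho\in(-\infty,-1)\cup(0,\infty)$ (concavity) and positive for $\rho\in(-1,0)$ (convexity), in agreement with the claim.

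The main obstacle I anticipate is purely organizational: carrying out the second-derivative expansion so that the cancellation $\rho s+(s-1)=0$ is transparent rather than buried, and dealing with the boundary point $z=1$, where factors such as $(1-z)^{s-1}$ may diverge. I would sidestep the latter by proving concavity/convexity on the open interval $[0,1)$ through the sign of $g_{zz}$ and then extending to $[0,1]$ using the continuity of $g$. The degenerate case $\rho=0$ is immediate since $g(0,z)\equiv 1$. Finally, because $g(\rho,z)=g(\rho,-z)$ is even in $z$, all conclusions transfer verbatim to $z\in[-1,0]$, completing the statement.
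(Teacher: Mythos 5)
Your proposal is correct and follows essentially the same route as the paper: both compute $\partial g/\partial z$ and $\partial^2 g/\partial z^2$ directly and read off the signs from $\rho/(1+\rho)$, and your collapsed second derivative $-\frac{\rho}{1+\rho}\,h(z)^{\rho-1}(1-z^2)^{\frac{1}{1+\rho}-2}$ is exactly the expression displayed in the paper's Appendix I. The only differences are cosmetic (the substitution $s=\frac{1}{1+\rho}$, the explicit cancellation $\rho s+(s-1)=0$, and your extra care about the endpoint $z=1$, which the paper glosses over).
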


We denote by $g^{-1}(\rho, t)$ the inverse of the function $g(\rho, z)$ with respect to its second argument. The variable $t$ always takes values from a subset of the interval $[0, 2]$. More specifically, $t\in[2^{-\rho}, 1]$ when $\rho\geq 0$, and $t\in[1, 2^{-\rho}]$ when $\rho\in(-1, 0)$. For shorthand notation, we denote the range of possible values by $t\in[2^{-\rho}, 1]\cup[1, 2^{-\rho}]$, for $\rho>-1$.

Finally, we note that by using \eqref{eq:Eo}, the function $R(\rho, W) = \displaystyle\frac{\partial}{\partial\rho}E_0( \rho, W)$ can be written as 
\begin{equation}\label{eq::R_canonical}
 R(\rho, W) = \displaystyle\frac{-\partial\Expt\left[ g(\rho, Z)\right]/\partial\rho}{\Expt\left[g(\rho, Z)\right]}
 = \displaystyle\frac{\Expt\left[-\partial g(\rho, Z)/\partial\rho\right]}{\Expt\left[g(\rho, Z)\right]},
\end{equation}
where the second equality follows by the dominated convergence theorem.

\subsection{Fun facts about $E_0$ and $E_0'$ of BECs and BSCs}
In this section, we explain some simple facts related to the $E_0$ curves of BECs and BSCs. We will be using some of these facts many times throughout the results section. 

Consider first the representation in \eqref{eq:Eo}. It is not difficult to see that the BECs and the BSCs are special cases of this representation. 

\newtheorem{fact}{Fact}
\begin{fact}\label{fact:Z_bec_bsc}\cite{notes1}
The random variable $Z_{BEC}$ of a BEC is $\{0, 1\}$ valued and satisfy $\Prob[Z_{BEC} = 0] = \epsilon$, where $\epsilon\in[0, 1]$ is the erasure probability of the channel. The random variable $Z_{BSC}$ of a BSC is a constant given by $z_{BSC} = 1-2x$ assuming that $x\in[0, 0.5]$ is the crossover probability of the channel.
\end{fact}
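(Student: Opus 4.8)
The plan is to verify both claims by direct substitution into the definitions of $W(y)$ and $\Delta(y)$ given just above, and then to read off the law of $Z=|\Delta(Y)|$ with $Y\sim W(y)$. Both channels have only a handful of output letters, so the computation reduces to evaluating the output marginal and $\Delta$ at each letter and collecting the resulting atoms. No general machinery is needed beyond the representation $E_0(\rho,W)=-\log\Expt[g(\rho,Z)]$ already established, so the argument is purely a matter of bookkeeping for the two specific channels.

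For the BEC with erasure probability $\epsilon$, I would take the output alphabet to be $\{0,1,?\}$ with $W(0\mid0)=W(1\mid1)=1-\epsilon$, $W(1\mid0)=W(0\mid1)=0$, and $W(?\mid0)=W(?\mid1)=\epsilon$. Substituting into $W(y)=\tfrac12\bigl(W(y\mid0)+W(y\mid1)\bigr)$ gives $W(0)=W(1)=\tfrac{1-\epsilon}{2}$ and $W(?)=\epsilon$, while substituting into $\Delta$ gives $\Delta(0)=1$, $\Delta(1)=-1$, and $\Delta(?)=0$. Hence $Z=|\Delta(Y)|$ takes the value $1$ when $Y\in\{0,1\}$, which occurs with total probability $1-\epsilon$, and the value $0$ when $Y=?$, which occurs with probability $\epsilon$. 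This is exactly a $\{0,1\}$-valued random variable with $\Prob[Z_{BEC}=0]=\epsilon$, as claimed.

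For the BSC with crossover probability $x\in[0,\tfrac12]$, the output alphabet is $\{0,1\}$ with $W(0\mid0)=W(1\mid1)=1-x$ and $W(1\mid0)=W(0\mid1)=x$. Then $W(0)=W(1)=\tfrac12$, while $\Delta(0)=1-2x$ and $\Delta(1)=-(1-2x)$. Since $x\le\tfrac12$, we have $|\Delta(0)|=|\Delta(1)|=1-2x\ge0$, so $Z=|\Delta(Y)|$ equals the constant $1-2x$ regardless of the realized value of $Y$. This yields $z_{BSC}=1-2x$.

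There is no genuine obstacle here: the statement follows immediately from the derived representation $Z=|\Delta(Y)|$. The only point deserving a word of care is the role of the absolute value, which merges the two unerased output letters of the BEC into a single atom at $z=1$, and which, for the BSC, makes the two symmetric outputs produce the identical value $1-2x$; this is precisely what collapses $Z_{BSC}$ to a constant.
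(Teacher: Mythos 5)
Your proposal is correct and follows exactly the route the paper intends: the paper states this fact as a routine special case of the representation $Z=\lvert\Delta(Y)\rvert$, $Y\sim W(y)$ (citing its reference rather than writing out the computation), and your direct substitution for the BEC's three output letters and the BSC's two output letters is precisely that verification. The bookkeeping is right, including the observation that the absolute value merges the BEC's two unerased outputs into a single atom at $z=1$ and collapses $Z_{BSC}$ to the constant $1-2x$.
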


It is well known that the set of BECs and BSCs are ordered in terms of their channel capacities: if the chances of an erasure to happen at the output of a BEC model, or similarly of a bit flip at the output of a BSC model is increasing, the transmission capacities shall decrease, see for instance the textbook \cite{578869}. Intuitively, we expect this graceful degradation to order as well other measures of channel quality. For that purpose, we start by computing the $E_0$ and $E_0'$ parameters of a BEC and a BSC as a function of the erasure probability and the crossover probability of the channels. Let $BEC$ be a BEC with erasure probability $\epsilon\in[0, 1]$. Then, one can easily derive that
\begin{equation}\label{eq:E0_bec}
E_0(\rho, BEC) = -\log{(2^{-\rho}(1-\epsilon) + \epsilon)},
\end{equation}
and
\begin{equation}\label{eq:R_bec}
R(\rho, BEC) = \displaystyle\frac{\partial}{\partial\rho}E_{0}(\rho, BEC) = 
\displaystyle\frac{2^{-\rho} (1 - \epsilon) \log{2}}{2^{-\rho} (1 - \epsilon) + \epsilon}.
\end{equation}
Let $BSC$ be a BSC with crossover probability $x\in[0, 0.5]$. In this case, we are saved from the trouble by  \cite[Example 1 p.146]{578869} which has the derivation of the $E_0$ parameter of a BSC in Equation (5.6.40) and its rate parameter in Equation (5.6.41). Rewriting these equations, we get 
\begin{equation}\label{eq:E0_bsc}
E_0(\rho, BSC) = \rho - (1+\rho)\log\left(x^{\frac{1}{1+\rho}} + (1-x)^{\frac{1}{1+\rho}}\right),
\end{equation} 
and
\begin{equation}\label{eq:R_bsc}
R(\rho, BSC) = 1-\mathcal{H}(\delta),
\end{equation}
where $\delta = \displaystyle\frac{x^{\frac{1}{1+\rho}}}{x^{\frac{1}{1+\rho}}+(1-x)^{\frac{1}{1+\rho}}}$. 

Now, we show that these parameters are monotone functions in the erasure/crossover probabilities of the channels. 
\begin{lemma}\label{lem:ordering_bec_bsc}
For any $\rho\geq 0$, $E_0(\rho, BEC)$ $\left(E_0(\rho, BSC)\right)$ is
decreasing in $\epsilon$ $(x)$. For any $\rho\in(-1, 0]$, $E_0(\rho, BEC)$ $(E_0(\rho, BSC))$ is increasing in $\epsilon$ $(x)$. Moreover, for any $\rho>-1$, $R(\rho, BEC)$ $(R(\rho, BSC))$ is decreasing in $\epsilon$ $(x)$.  
\end{lemma}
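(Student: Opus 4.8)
The plan is to prove all four monotonicity claims by direct differentiation of the closed-form expressions in \eqref{eq:E0_bec}--\eqref{eq:R_bsc}, since each of the four parameters is given explicitly as a function of $\epsilon$ (resp.\ $x$), with $\rho$ a fixed parameter. The two BEC claims are nearly immediate; the two BSC claims carry the bulk of the work, with the rate parameter being the one I expect to require the most care.

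First I would treat $E_0(\rho, BEC)$. Writing the argument of the logarithm in \eqref{eq:E0_bec} as $2^{-\rho} + \epsilon(1 - 2^{-\rho})$ and differentiating gives $\partial E_0/\partial\epsilon = -(1 - 2^{-\rho})/(2^{-\rho}(1-\epsilon) + \epsilon)$. The denominator is strictly positive, so the sign is governed entirely by $1 - 2^{-\rho}$: nonnegative for $\rho \geq 0$ (hence $E_0$ decreasing in $\epsilon$) and nonpositive for $\rho \in (-1, 0]$ (hence increasing), exactly as claimed. For $R(\rho, BEC)$ I would apply the quotient rule to \eqref{eq:R_bec}. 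The key point I expect is that the numerator of the derivative telescopes: after cancelling the $\epsilon$-dependent terms it collapses to $-2^{-\rho}\log 2$, which is strictly negative for every $\rho > -1$, so $R(\rho, BEC)$ is decreasing in $\epsilon$ irrespective of the sign of $\rho$.

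For $E_0(\rho, BSC)$, setting $s = \tfrac{1}{1+\rho}$ and differentiating \eqref{eq:E0_bsc}, I would use the simplification $(1+\rho)s = 1$ to reduce the derivative to $-\bigl(x^{s-1} - (1-x)^{s-1}\bigr)\big/\bigl(x^{s} + (1-x)^{s}\bigr)$. The sign then reduces to comparing $x^{s-1}$ and $(1-x)^{s-1}$ over $x \in [0, 0.5]$, i.e.\ for $x \leq 1-x$. When $\rho \geq 0$ we have $s - 1 \in (-1, 0]$, so $u \mapsto u^{s-1}$ is nonincreasing and $x^{s-1} \geq (1-x)^{s-1}$, making $E_0$ decreasing in $x$; when $\rho \in (-1, 0)$ we have $s - 1 > 0$, the power map is increasing, the inequality reverses, and $E_0$ is increasing in $x$.

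For $R(\rho, BSC) = 1 - \mathcal{H}(\delta)$, rather than differentiating the entropy directly I would argue by composition, which is where I would concentrate the effort. Writing $\delta = 1/\bigl(1 + ((1-x)/x)^{s}\bigr)$ shows that $\delta$ is increasing in $x$ on $(0, 0.5]$ (since $s > 0$ and $(1-x)/x$ decreases), and that $x \leq 1-x$ forces $\delta \leq 0.5$, so $\delta$ stays in $[0, 0.5]$. Since $\mathcal{H}$ is increasing on $[0, 0.5]$, the composition $\mathcal{H}(\delta)$ is increasing in $x$, hence $R = 1 - \mathcal{H}(\delta)$ is decreasing in $x$ for every $\rho > -1$. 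I expect this last case to be the genuine obstacle: a direct derivative forces one to simplify $\mathcal{H}'(\delta) = s\log\frac{1-x}{x}$ and $\partial\delta/\partial x = s\,(x(1-x))^{s-1}\big/(x^{s}+(1-x)^{s})^{2}$ and then verify their product has a definite sign, whereas the monotone-composition route sidesteps that algebra entirely.
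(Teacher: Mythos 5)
Your proposal is correct and follows essentially the same route as the paper: direct differentiation of the closed forms \eqref{eq:E0_bec}--\eqref{eq:E0_bsc} with the sign read off from $1-2^{-\rho}$ (resp.\ the comparison of $x^{s-1}$ and $(1-x)^{s-1}$ over $x\le 1-x$), and the monotone-composition argument ($\delta$ increasing in $x$, $\delta\in[0,0.5]$, $\mathcal{H}$ increasing there) for $R(\rho,BSC)$, which is precisely what the paper does. The only cosmetic differences are that you fold the factor $(1+\rho)s=1$ into the BSC derivative rather than analyzing the term inside the logarithm separately, and you make the collapse of the quotient-rule numerator for $R(\rho,BEC)$ explicit.
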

\begin{proof}

Taking the first derivative of \eqref{eq:E0_bec} with respect to $\epsilon$, we get
\begin{equation}
\displaystyle\frac{\partial}{\partial\epsilon}E_{0}(\rho, BEC) =
-\displaystyle\frac{1-2^{-\rho}}{2^{-\rho}(1-\epsilon)+\epsilon}.
\end{equation}
One can check that 
\begin{equation}
\displaystyle\frac{\partial}{\partial\epsilon}E_{0}(\rho, BEC) \begin{cases}
> 0, & \hbox{for } \rho\in(-1, 0)\\
= 0, & \hbox{for } \rho = 0 \\
< 0 & \hbox{for }  \rho > 0
\end{cases}.
\end{equation}
As $E_{0}(0, W) = 0$, the $E_0$ curves of all BECs will be ordered such that while for $\rho > 0$ the $E_0$ curves of BECs with smaller erasure probabilities will be larger, for $\rho\in(-1,0)$ the opposite will be true.
 
Now, we show an ordering also holds for the $R$ parameters of BECs.
Taking the first derivative of \eqref{eq:R_bec} with respect to $\epsilon$, we get 
\begin{equation}
\displaystyle\frac{\partial}{\partial\epsilon}R(\rho, BEC) =
-\displaystyle\frac{2^\rho \log{2}}{(1 + (-1 + 2^\rho) \epsilon)^2} < 0.
\end{equation}
Hence, the rate parameters will be decreasing with the erasure probability of the channel for any $\rho>-1$. This completes the proof for the BEC.

Now, we prove the claims for the set of BSCs. First, we note that the term inside the logarithm in  \eqref{eq:E0_bsc} satisfies for $x\in[0, 0.5]$
\begin{equation}
\displaystyle\frac{\partial}{\partial x}\left(x^{\frac{1}{1+\rho}} + (1-x)^{\frac{1}{1+\rho}}\right)
=  \displaystyle\frac{x^{-\frac{\rho}{1+\rho}} - (1-x)^{-\frac{\rho}{1+\rho}}}{1+\rho}\\
= \begin{cases}
< 0, & \hbox{for } \rho\in(-1, 0)\\
= 1, & \hbox{for } \rho = 0 \\
> 0 & \hbox{for }  \rho > 0
\end{cases}.
\end{equation}
Hence, we also have
\begin{equation}
\displaystyle\frac{\partial}{\partial x}E_{0}(\rho, BSC) \begin{cases}
> 0, & \hbox{for } \rho\in(-1, 0)\\
= 0, & \hbox{for } \rho = 0 \\
< 0 & \hbox{for }  \rho > 0
\end{cases},
\end{equation}
which proves the claimed ordering for $E_{0}(\rho, BSC)$. To prove the claim for $R(\rho, BSC)$, we simply note that in \eqref{eq:R_bsc}, for $x\in[0, 0.5]$, we have $\delta\in[0, 0.5]$ increasing in $x$ and the binary entropy function $\mathcal{H}(\delta)$ increasing in $\delta\in[0, 0.5]$. As a result, 
\begin{equation}
\displaystyle\frac{\partial}{\partial x}R(\rho, BSC) < 0,
\end{equation}
as claimed.
\end{proof}

By this lemma, the second fact is in order:

\begin{fact}\label{fact:ordering_bec_bsc}
For any $\rho >-1$, the class of BECs and the class of BSCs ($x\in[0,0.5]$) are strictly ordered in their $E_0(\rho, W)$ parameters, except at $\rho=0$ where $E_0(0, W) = 0$, and in their $R(\rho, W)$ parameters.   
\end{fact}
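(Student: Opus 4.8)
The plan is to obtain the claimed strict orderings as an immediate consequence of the strict monotonicity established in Lemma \ref{lem:ordering_bec_bsc}. First I would observe that the derivative computations in the proof of that lemma are in fact strict, not merely weak. For the BEC, the numerator $1 - 2^{-\rho}$ of $\partial E_0(\rho, BEC)/\partial\epsilon$ is strictly positive for $\rho>0$ and strictly negative for $\rho\in(-1,0)$, while the denominator $2^{-\rho}(1-\epsilon)+\epsilon$ is strictly positive throughout $\epsilon\in[0,1]$; hence $E_0(\rho, BEC)$ is a strictly monotone function of $\epsilon$ away from $\rho=0$. The same holds for $R(\rho, BEC)$, whose derivative $-2^\rho\log 2/(1+(-1+2^\rho)\epsilon)^2$ is strictly negative for every $\rho>-1$. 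Analogously, for the BSC the sign analysis of $\partial E_0(\rho, BSC)/\partial x$ is strict for $\rho\neq 0$, and $R(\rho, BSC)=1-\mathcal{H}(\delta)$ is strictly decreasing in $x$ because $\delta$ is strictly increasing in $x\in[0,0.5]$ and the binary entropy $\mathcal{H}$ is strictly increasing on $[0,0.5]$.

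The second step is to convert strict monotonicity into a strict ordering. A strictly monotone function on an interval is injective and either order-preserving or order-reversing, so any two distinct channels in each class --- distinguished by $\epsilon_1\neq\epsilon_2$ for BECs or $x_1\neq x_2$ for BSCs --- necessarily take distinct values of the relevant parameter, and these values inherit a consistent total order from the underlying channel parameter. Concretely, for $\rho>0$ a BEC with smaller erasure probability has strictly larger $E_0(\rho, BEC)$, while for $\rho\in(-1,0)$ the ordering reverses, and for every $\rho>-1$ a smaller erasure probability yields strictly larger $R(\rho, BEC)$. The identical logic applied to the crossover probability $x$ delivers the claim for the class of BSCs.

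Finally I would account for the stated exception at $\rho=0$. Here the common boundary value $E_0(0,W)=0$ holds for every channel, matching the vanishing of both $\partial E_0(0, BEC)/\partial\epsilon$ and $\partial E_0(0, BSC)/\partial x$, so the $E_0$ parameter collapses to a single point and no strict ordering can exist at this one value of $\rho$. Since the $R$ derivatives remain strictly signed even at $\rho=0$, the strict ordering in $R(\rho, W)$ persists across the entire range $\rho>-1$ without exception. I do not anticipate a genuine obstacle here: the only points requiring care are verifying that each inequality furnished by Lemma \ref{lem:ordering_bec_bsc} is strict, and ensuring that the single degenerate value $\rho=0$ is correctly excluded for the $E_0$ parameter but retained for the $R$ parameter.
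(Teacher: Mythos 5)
Your proposal is correct and follows essentially the same route as the paper, which presents this Fact as an immediate consequence of Lemma \ref{lem:ordering_bec_bsc} (``By this lemma, the second fact is in order''), i.e., strict monotonicity of $E_0$ and $R$ in the erasure/crossover probability yields the strict ordering of each class, with $\rho=0$ excepted for $E_0$ since $E_0(0,W)=0$ for all channels. Your added care about strictness of the derivative signs and the conversion of strict monotonicity into a total order merely makes explicit what the paper leaves implicit.
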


The ordering we have just discussed is not peculiar to BECs and BSCs and can be generalized to more general classes of channels such as degraded ones. However, Lemma \ref{lem:ordering_bec_bsc} will be sufficient for our purpose as the derivations of Section \ref{sec:ER} does not need results of such a generality.

Next, we argue the validity of an assumption we will encounter in the hypothesis of the main theorem.  
\begin{lemma}\label{lem::cando}
For any given B-DMC $W$ and any fixed $\rho>-1$, there exist a BEC $BEC$ and a BSC $BSC$ such that
\begin{equation}\label{eq:same_E0}
E_0(\rho, W) = E_0(\rho, BEC) = E_0(\rho, BSC).
\end{equation}
The erasure probability of $BEC$ and the crossover probability
of $BSC$ depend both on the channel $W$ and the parameter $\rho$.
\end{lemma}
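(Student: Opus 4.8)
The plan is to reduce the matching of $E_0$ values to a one-dimensional intermediate-value argument built on the representation \eqref{eq:Eo}. First I would pin down the range of the quantity $\Expt[g(\rho, Z)]$ that governs $E_0(\rho, W) = -\log \Expt[g(\rho, Z)]$. By Lemma \ref{lem:g_concavity} the map $z \mapsto g(\rho, z)$ is monotone on $[0, 1]$, so for every $z \in [0, 1]$ the value $g(\rho, z)$ lies between its two endpoint values $g(\rho, 0) = 1$ and $g(\rho, 1) = 2^{-\rho}$. Taking expectations preserves these bounds, so $\Expt[g(\rho, Z)]$ lies in the interval with endpoints $1$ and $2^{-\rho}$, and consequently $E_0(\rho, W)$ lies in the closed interval $\mathcal{I}$ with endpoints $0$ and $\rho\log 2$. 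This $\mathcal{I}$ is the target that both special channels must cover.

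Next I would verify that the BEC family sweeps all of $\mathcal{I}$ as its erasure probability varies. From \eqref{eq:E0_bec}, $E_0(\rho, BEC) = -\log(2^{-\rho}(1-\epsilon) + \epsilon)$ is continuous in $\epsilon$, equals $\rho\log 2$ at $\epsilon = 0$ and equals $0$ at $\epsilon = 1$, and is strictly monotone in $\epsilon$ by Fact \ref{fact:ordering_bec_bsc}. I would argue identically for the BSC family: using Fact \ref{fact:Z_bec_bsc}, write $E_0(\rho, BSC) = -\log g(\rho, 1 - 2x)$, so continuity of $g$ together with the endpoint evaluations $g(\rho, 1) = 2^{-\rho}$ at $x = 0$ and $g(\rho, 0) = 1$ at $x = 0.5$ shows that $E_0(\rho, BSC)$ runs from $\rho\log 2$ down to $0$, strictly monotonically in $x$ again by Fact \ref{fact:ordering_bec_bsc}. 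Hence each of the two one-parameter families realizes exactly the $E_0$ values filling $\mathcal{I}$.

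Finally, since $E_0(\rho, W) \in \mathcal{I}$ while $\epsilon \mapsto E_0(\rho, BEC)$ and $x \mapsto E_0(\rho, BSC)$ are continuous and surjective onto $\mathcal{I}$, the intermediate value theorem produces an erasure probability and a crossover probability for which \eqref{eq:same_E0} holds, and the strict monotonicity makes each choice unique. I do not anticipate a genuine obstacle, since the substance is carried entirely by the endpoint evaluations of $g$ and the monotonicity facts already in hand; the one point demanding care is handling the sign change at $\rho = 0$ uniformly. For $\rho \in (-1, 0)$ one has $2^{-\rho} > 1$, so $\mathcal{I}$ sits below $0$ and the inequalities orienting the interval reverse, which forces the argument to be phrased in terms of the endpoints of $\mathcal{I}$ rather than an assumption that $E_0 \geq 0$. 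The degenerate case $\rho = 0$, where $E_0(0, W) \equiv 0$ and $\mathcal{I}$ collapses to the single point $\{0\}$, should be recorded separately as trivially true.
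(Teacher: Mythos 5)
Your proof is correct and follows essentially the same route as the paper's: both arguments reduce the matching of $E_0$ values to an intermediate-value argument based on the representation \eqref{eq:Eo}, the fact that $g(\rho,\cdot)$ ranges over the interval with endpoints $g(\rho,0)=1$ and $g(\rho,1)=2^{-\rho}$, and the special structure of $Z_{BEC}$ and $z_{BSC}$; the only cosmetic difference is that you run the IVT in the channel parameters $\epsilon$ and $x$ at the level of $E_0$ itself, while the paper runs it in $z$ at the level of $\Expt[g(\rho,Z)]$ (IVT for the BSC, then linear weights for the BEC). Incidentally, your endpoint evaluations are the correct ones — the paper's proof contains a typo swapping $g(\rho,0)$ and $g(\rho,1)$ — and your uniqueness remark via strict monotonicity is a small bonus not claimed in the lemma.
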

\begin{proof}
Observe that, by \eqref{eq:Eo}, the equality of the $E_0$ functions in \eqref{eq:same_E0} is equivalent to the equality of 
\begin{equation}\label{eq:eq_Z}
 \Expt\left[g(\rho, Z)\right] = \Expt\left[g(\rho, Z_{BEC})\right] = g(\rho, z_{BSC}),
\end{equation}
where  $Z$, $Z_{\mathsf{BEC}}$ and $z_{\mathsf{BSC}}$ correspond to the `$Z$' random variables of the channel $W$, the channel $BEC$, and the channel $BSC$, respectively. Therefore, to show that there exists a BSC and a BEC satisfying \eqref{eq:same_E0}, it is sufficient to show that there exists $Z_{\mathsf{BEC}}$ and $z_{\mathsf{BSC}}$ random variables satisfying \eqref{eq:eq_Z}.
By the monotonicity results stated in Lemma 5, we know that 
\begin{align}
&g(\rho, z)\in [2^{-\rho}, 1], \quad \hbox{for } \rho\geq 0, \\
&g(\rho, z)\in [1, 2^{-\rho}], \quad \hbox{for }\rho\in(-1, 0],
\end{align}
for $z\in[0, 1]$. As a result, 
\begin{align}
&E[g(\rho, Z)]\in [2^{-\rho}, 1], \quad \hbox{for } \rho\geq 0, \\
&E[g(\rho, Z)]\in [1, 2^{-\rho}], \quad \hbox{for }\rho\in(-1, 0].
\end{align} 
Moreover, $g$ being continuous in $z$ for fixed values of $\rho$ implies that every intermediate value of the corresponding bounded interval 
will be taken by the function $g(\rho, z)$ for $z\in[0, 1]$, i.e. we can always find a $z^*\in[0, 1]$ such that 
\begin{equation}
E[g(\rho, Z)]=g(\rho, z^*).
\end{equation} 
Since, as indicated in Fact \ref{fact:Z_bec_bsc}, the random variable $Z_{BSC}$ of a
BSC is a constant $z_{BSC}$, the BSC defined in \eqref{eq:eq_Z} will be a BSC such that $z_{BSC} = z^*$. From this the crossover probability of the channel can be inferred.

To find a BEC which satisfies \eqref{eq:eq_Z}, we will use the BSC we have just defined with parameter $z^*$. 
Note that the extreme values of the bounded interval from which $g(\rho, z)$ takes values are given by $2^{-\rho} = g(\rho, 0)$ and $1 = g(\rho, 1)$. 
Moreover, the function $g$ being continuous in $z\in[0, 1]$ for fixed values of $\rho$, we can weight these two values with a probability distribution $p_0$ and $1-p_0$ such that 
\begin{equation}
g(\rho, z^*) = p_0 g(\rho, 0) + (1-p_0)g(\rho, 1).
\end{equation} 
Since, as indicated in Fact \ref{fact:Z_bec_bsc}, the random variable $Z_{BEC}$ of a BEC is $\{0, 1\}$ valued, the BEC defined in \eqref{eq:eq_Z} will be a BEC with erasure probability given by $P(Z_{BEC} = 0) = p_0$.    
\end{proof}

Upon this lemma, another property of BECs and BSCs is due:

\begin{fact}\label{fact:sweeping_bec_bsc}
The set of BECs and the set of BSCs both sweep all the possible values the $E_0$ parameters of B-DMCs can take at any $\rho >-1$. 
\end{fact}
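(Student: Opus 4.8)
The plan is to recognize this statement as an immediate corollary of Lemma~\ref{lem::cando}, so that the work reduces to combining two set inclusions. Fix $\rho>-1$ and let $\mathcal{E}(\rho)=\{E_0(\rho,W):W\text{ a B-DMC}\}$ denote the set of all $E_0$ values attainable at $\rho$, and let $\mathcal{E}_{BEC}(\rho)$ and $\mathcal{E}_{BSC}(\rho)$ denote the analogous sets obtained by letting $W$ range only over BECs (with $\epsilon\in[0,1]$) and only over BSCs (with $x\in[0,0.5]$), respectively. The goal is to prove $\mathcal{E}_{BEC}(\rho)=\mathcal{E}(\rho)$ and $\mathcal{E}_{BSC}(\rho)=\mathcal{E}(\rho)$, which is exactly what it means for the two classes to ``sweep all the possible values.''

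I would first dispose of the easy inclusion. Since every BEC and every BSC is itself a B-DMC, we trivially have $\mathcal{E}_{BEC}(\rho)\subseteq\mathcal{E}(\rho)$ and $\mathcal{E}_{BSC}(\rho)\subseteq\mathcal{E}(\rho)$. For the reverse inclusion I would invoke Lemma~\ref{lem::cando} directly: given an arbitrary B-DMC $W$, that lemma produces a BEC and a BSC with $E_0(\rho,BEC)=E_0(\rho,W)=E_0(\rho,BSC)$. Hence every element of $\mathcal{E}(\rho)$ already lies in both $\mathcal{E}_{BEC}(\rho)$ and $\mathcal{E}_{BSC}(\rho)$, so $\mathcal{E}(\rho)\subseteq\mathcal{E}_{BEC}(\rho)\cap\mathcal{E}_{BSC}(\rho)$. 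The two inclusions together yield the claimed equalities for every $\rho>-1$.

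If one wants the stronger, more descriptive reading of ``sweep'' --- that the single scalar parameter of each channel traces out the whole attainable set as a full interval, continuously and monotonically --- I would supplement the above with the monotonicity already established: Lemma~\ref{lem:ordering_bec_bsc} gives strict monotonicity of $E_0(\rho,BEC)$ in $\epsilon$ and of $E_0(\rho,BSC)$ in $x$ (away from $\rho=0$, cf. Fact~\ref{fact:ordering_bec_bsc}), while continuity of the explicit expressions \eqref{eq:E0_bec} and \eqref{eq:E0_bsc} in these parameters forces the image to be an interval; evaluating the endpoints $\epsilon\in\{0,1\}$ and $x\in\{0,0.5\}$ then pins that interval down explicitly. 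I do not anticipate any genuine obstacle here: the whole of the substantive content was already carried out inside Lemma~\ref{lem::cando}, whose proof used an intermediate-value argument on the continuous map $z\mapsto g(\rho,z)$. The present fact adds only the trivial observation that BECs and BSCs are themselves B-DMCs, so the main ``difficulty'' is simply recognizing that the heavy lifting has already been done.
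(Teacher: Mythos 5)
Your proposal is correct and takes essentially the same route as the paper: the paper offers no separate proof for this fact, presenting it as an immediate consequence of Lemma~\ref{lem::cando} (``Upon this lemma, another property of BECs and BSCs is due''), which is exactly the reverse inclusion you invoke, with the trivial inclusion (every BEC/BSC is itself a B-DMC) left implicit. Your supplementary remark about the attained values forming an interval, via monotonicity and continuity of \eqref{eq:E0_bec} and \eqref{eq:E0_bsc}, is a harmless strengthening consistent with the paper's Lemma~\ref{lem:ordering_bec_bsc} but not required.
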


Suppose now the $E_0$ curves of a BEC and a BSC intersect at a particular $\rho^*>-1$ other than $\rho^{*} = 0$. We would like to know if there are any other $\rho>-1$ values
apart from the trivial $\rho = 0$ such that the $E_0$ curves of these two channels intersect again? The next lemma answer this question. 

\begin{lemma}\label{lem::bsc_bec}
Suppose a BSC $BSC$, and a BEC $BEC$ satisfy
\begin{equation}\label{eq::Equal_E0_bsc_bec}
 E_0(\rho^{*}, BEC) = E_0(\rho^{*}, BSC),
\end{equation}
for some $\rho^{*} > -1$ such that $\rho^{*}\neq 0$. 
Then, if $\rho^*\leq 1$, there is only one other intersection point between the $E_0$ curves of the channels at $\rho=0$. If $\rho^{*} > 1$, the only intersection point in the interval $(-1,1]$ is once more at $\rho = 0$, and for the rest either the $E_0$ curves of the channels are tangent to each others at $\rho^*$, i.e.,
\begin{equation}\label{eq::R_tangent}
R(\rho^*, BEC) = R(\rho^*, BSC)
\end{equation}
is satisfied, or there exists a different $\rho'>1$ such that 
\begin{equation}\label{eq::Equal_again_bsc_bec}
 E_0(\rho', BEC) = E_0(\rho', BSC).
\end{equation} 

\end{lemma}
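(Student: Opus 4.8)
The plan is to collapse the entire intersection count to the level sets of one explicit scalar function. Write $\epsilon$ for the erasure probability of $BEC$ and put $s:=z_{BSC}=1-2x$. By \eqref{eq:Eo} and Fact \ref{fact:Z_bec_bsc}, the equality $E_0(\rho,BEC)=E_0(\rho,BSC)$ is the same as $\Expt[g(\rho,Z_{BEC})]=g(\rho,s)$, i.e.
\begin{equation}
\epsilon+(1-\epsilon)2^{-\rho}=g(\rho,s).
\end{equation}
Multiplying by $2^{\rho}$ and setting $\phi(\rho):=2^{\rho}g(\rho,s)=\bigl(x^{\frac{1}{1+\rho}}+(1-x)^{\frac{1}{1+\rho}}\bigr)^{1+\rho}$, this reads $\epsilon(2^{\rho}-1)=\phi(\rho)-1$. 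Since $\phi(0)=1$, the point $\rho=0$ is always a solution, and for $\rho\neq0$ the crossings are exactly the solutions of $F(\rho)=\epsilon$, where
\begin{equation}
F(\rho):=\frac{\phi(\rho)-1}{2^{\rho}-1},\qquad \rho\in(-1,\infty)\setminus\{0\}.
\end{equation}
So everything is decided by how often the graph of $F$ meets the level $\epsilon$.

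Next I would record the boundary and threshold values of $F$. A short computation gives $\lim_{\rho\to-1^{+}}F(\rho)=1-s$, while $g(1,z)=\tfrac12\bigl(1+\sqrt{1-z^{2}}\bigr)$ and $\lim_{\rho\to\infty}g(\rho,z)=\sqrt{1-z^{2}}$ yield
\begin{equation}
F(1)=\lim_{\rho\to\infty}F(\rho)=\sqrt{1-s^{2}}.
\end{equation}
The coincidence $F(1)=F(\infty)$ is the analytic reason $\rho=1$ is special. I would then prove two structural facts: \emph{(a)} $F$ is strictly increasing on $(-1,1]$, hence injective there with maximum $F(1)=\sqrt{1-s^{2}}$; and \emph{(b)} $F(\rho)>\sqrt{1-s^{2}}$ for every $\rho>1$. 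Granting these, the lemma follows by comparing $\epsilon$ with $\sqrt{1-s^{2}}$. If $\epsilon\le\sqrt{1-s^{2}}$, then by \emph{(a)} the equation $F(\rho)=\epsilon$ has a unique root $\rho^{*}\in(-1,1]$ and by \emph{(b)} none in $(1,\infty)$, so the only crossings are $\rho=0$ and $\rho^{*}\le1$. If $\epsilon>\sqrt{1-s^{2}}$, then by \emph{(a)} there is no root in $(-1,1]$ (leaving only the trivial $\rho=0$ there) and every root $\rho^{*}$ lies in $(1,\infty)$; since $F(1)=\lim_{\rho\to\infty}F(\rho)=\sqrt{1-s^{2}}<\epsilon$, the function $h(\rho):=E_0(\rho,BEC)-E_0(\rho,BSC)$ has the same nonzero sign at $\rho=1$ and as $\rho\to\infty$. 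Hence if the crossing at $\rho^{*}$ is transversal it forces, by the intermediate value theorem, a second crossing $\rho'>1$ with $\rho'\neq\rho^{*}$; and if it is not, then $h'(\rho^{*})=0$, which by \eqref{eq::R_canonical} is exactly the tangency $R(\rho^{*},BEC)=R(\rho^{*},BSC)$ since $h'(\rho)=R(\rho,BEC)-R(\rho,BSC)$.

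The crux, and the step I expect to be the real obstacle, is to establish \emph{(a)} and \emph{(b)}. Both amount to controlling the sign of $F'$, whose numerator is $N(\rho):=\phi'(\rho)(2^{\rho}-1)-(\phi(\rho)-1)\,2^{\rho}\ln2$: claim \emph{(a)} asks for $N>0$ on $(-1,1)$, and \emph{(b)} for $F-\sqrt{1-s^{2}}>0$ on $(1,\infty)$. Because $\phi$ is not rational, a direct attack on $N$ is delicate; the natural route is to substitute $\alpha=\tfrac{1}{1+\rho}$, so that $\phi=\bigl(x^{\alpha}+(1-x)^{\alpha}\bigr)^{1/\alpha}$ becomes a power mean, and to combine the concavity/convexity of $g(\rho,\cdot)$ from Lemma \ref{lem:g_concavity} with the monotonicity of the rate parameters from Lemma \ref{lem:ordering_bec_bsc}. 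The identity $F(1)=\lim_{\rho\to\infty}F(\rho)$ should then be read as the statement that $F-\sqrt{1-s^{2}}$ changes sign exactly once, at $\rho=1$, which is what simultaneously delivers the clean extremality on $(-1,1]$ and its breakdown beyond $\rho=1$.
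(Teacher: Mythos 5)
Your proposal is essentially the paper's own proof: your $F(\rho)=\bigl(\phi(\rho)-1\bigr)/\bigl(2^{\rho}-1\bigr)$ is exactly the auxiliary function $h(\rho,z)=\bigl(g(\rho,z)-2^{-\rho}\bigr)/\bigl(1-2^{-\rho}\bigr)$ that the paper introduces (multiply numerator and denominator by $2^{\rho}$), and your reduction of all intersections to the level set $F=\epsilon$, your boundary values $F(1)=\lim_{\rho\to\infty}F(\rho)=\sqrt{1-s^{2}}$, and your ensuing case analysis (unique crossing when $\rho^{*}\le 1$; tangency via $F'(\rho^{*})=0$ or a second crossing via the intermediate value theorem when $\rho^{*}>1$) all coincide with the paper's argument. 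The two structural facts you correctly flag as the crux and leave unproven are precisely what the paper also asserts without proof --- it claims $h(\cdot,z)$ increases up to some $\rho_{\max}(z)\ge 3$ and decreases thereafter with $\lim_{\rho\to\infty}h(\rho,z)=h(1,z)$, and then states ``as the analysis of the monotonicity property is tedious, we omit the proof'' --- so your attempt is complete to exactly the same degree as the paper's own, the shared gap being the monotonicity analysis of this one scalar function.
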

\begin{proof}
Let the erasure probability of the channel $BEC$ be $\epsilon$ and the channel $BSC$ be such that $z_{BSC} = z$. By \eqref{eq:E0_bec} and \eqref{eq:E0_bsc}, the condition for equality in \eqref{eq::Equal_E0_bsc_bec} translates into 
\begin{equation}\label{eq::cond_2}
g(\rho^{*}, z)  = 2^{-\rho^{*}}(1-\epsilon) + \epsilon. 
\end{equation}
Let the function $h(\rho, z)$ be defined as
\begin{equation}
 h(\rho, z) = \displaystyle\frac{g(\rho, z)-2^{-\rho}}{1-2^{-\rho}}.
\end{equation}
Observe that $h(\rho^*, z) = \epsilon$ and, in order for \eqref{eq::Equal_again_bsc_bec} to hold, we are looking for another $\rho'$ such that $h(\rho', z) = \epsilon$ holds. To find the answer, we need to study the monotonicity properties of the function $h(\rho, z)$ with respect to $\rho$. Indeed, one can show that the first derivative of $h(\rho, z)$ with respect to $\rho$ changes sign only once at $\rho_{\max}(z) \geq 3$ for every fixed value of $z$, such that $h(\rho, z)$ is increasing for $\rho\in(0, \rho_{\max}(z))$, 
and decreasing for $\rho>\rho_{\max}(z)$ with $\lim_{\rho\to\infty} h(\rho, z) = h(1, z)$. Consequently, if $\rho^*\in(-1,0)\cup(0, 1]$, no other $\rho'$ can satisfy \eqref{eq::Equal_again_bsc_bec}. On the other hand, if $\rho^*>1$, but $\rho^* \neq \rho_{\max}(z)$, then the two curves intersect twice. Finally, if $\rho^* =  \rho_{\max}(z)$, not only no other $\rho'$ can satisfy \eqref{eq::Equal_again_bsc_bec}, but also 
\begin{equation}
h(\rho^*, z) = h(\rho_{\max}(z), z) \geq h(\rho, z)
\end{equation}
holds for all $\rho >-1$. In this case, the $E_0$ curves of the channels will be tangent to each other, so \eqref{eq::R_tangent} holds as well.
As the analysis of the monotonicity property is tedious, we omit the proof.
\end{proof}

The previous lemma says that if the $E_0$ curves of a BEC and a BSC intersect somewhere between the interval $(-1, 0)\cup(0, 1]$, they cannot intersect a second time, except trivially at $0$, and if otherwise they intersect in the interval $(1, \infty)$, either the two curves are tangent to each other or they intersect twice in that interval, and the only intersection point in the interval $(-1, 1]$ is again at $0$. The significance of this lemma will become clear later when we interpret the extremality results. The lemma will help us to understand why some intervals of $\rho>-1$ are more interesting in the context of the extremality results presented in the main theorem.

\section{Extremality Results}\label{sec:ER}
In this section, we study the extremality of the BEC and the BSC with respect to the $E_0$ channel parameter. 
In particular, we show in Theorem \ref{thm:Extremality_Results} that a certain extremality property holds even when the quantities appearing in the parametric form of the random coding error exponent, i.e. $E_0$ and $E_0'$, are evaluated at different values of the parameter. The proof of the theorem is carried out in Section \ref{sub:Proof}.

\newtheorem{theorem}{Theorem}
\begin{theorem}\label{thm:Extremality_Results}
Given any fixed value of $\rho_1 > -1$, suppose a B-DMC $W$, a binary symmetric channel $BSC$, and a binary erasure channel $BEC$ satisfy 
\begin{equation}\label{eq::Equal_E0_thm1}
E_0(\rho_1, BSC) \overset{(a)}{\leq} E_0(\rho_1, W) \overset{(a')}{\leq} E_0(\rho_1, BEC), 
\end{equation}
for $\rho_1\neq 0$, or
\begin{equation}\label{eq::Equal_capacity_thm1}
\lim_{\rho\to0}
\displaystyle\frac{E_0(\rho, BSC)}{\rho}
 \overset{(a_0)}{\leq} \lim_{\rho\to 0}\displaystyle\frac{E_0(\rho, W)}{\rho}  \overset{({a_0}')}{\leq} \lim_{\rho\to 0}\displaystyle\frac{E_0(\rho, BEC)}{\rho},
\end{equation}
for $\rho_1 = 0$.

\begin{enumerate}
\item[(\textit{Part 1})]
If $\rho_1 \in [0, 3]$, then  
\begin{align} 
\label{eq:P1_R}&R(\rho_2, BSC) \overset{(b)}{\leq} R(\rho_2, W)\overset{(b')}{\leq} R(\rho_2, BEC), \\
\label{eq:P1_E}&E_0(\rho_2, BSC) \overset{(c)}{\leq} E_0(\rho_2, W)\overset{(c')}{\leq} E_0(\rho_2, BEC),
\end{align}
for any $\rho_2 \in [\rho_1, 3]$. 

\item[(\textit{Part 2})]
If $\rho_1 \in (-1, 0]$, then   
\begin{align}
\label{eq:P2_R}&R(\rho_2, BEC) \overset{(d)}{\leq} R(\rho_2, W)\overset{(d')}{\leq} R(\rho_2, BSC), \\
\label{eq:P2_E}&E_0(\rho_2, BSC) \overset{(e)}{\leq} E_0(\rho_2, W)\overset{(e')}{\leq} E_0(\rho_2, BEC),
\end{align}
for any $\rho_2\in(-1, \rho_1]$,

\item[(\textit{Part 3})]
If $\rho_1 \in (-1, 0]$, then  
\begin{equation} \label{eq:P3_E1}
E_0(\rho_2, BSC) \overset{(f)}{\leq} E_0(\rho_2, W)\overset{(f')}{\leq} E_0(\rho_2, BEC),
\end{equation}
for any $\rho_2 \geq 0$.

If $\rho_1 \in [0, 1]$, then  
\begin{equation} \label{eq:P3_E2}
E_0(\rho_2, BSC) \overset{(g)}{\leq} E_0(\rho_2, W)\overset{(g')}{\leq} E_0(\rho_2, BEC),
\end{equation}
for any $\rho_2 \geq \rho_1$.

If $\rho_1 > 1$, then  
\begin{equation} \label{eq:P3_E3}
E_0(\rho_2, BEC) \overset{(h)}{\leq} E_0(\rho_2, W)\overset{(h')}{\leq} E_0(\rho_2, BSC),
\end{equation}
for any $\rho_2 \in [0, 1]$.

If $\rho_1 > 1$, then  
\begin{equation} \label{eq:P3_E4}
E_0(\rho_2, BSC) \overset{(i)}{\leq} E_0(\rho_2, W)\overset{(i')}{\leq} E_0(\rho_2, BEC),
\end{equation}
for any $\rho_2 \in (-1, 0]$.
\end{enumerate}
Moreover, the extremalities hold with strict inequalities, except for $\rho_2 = 0$, whenever $(a)$ and $(a')$ in \eqref{eq::Equal_E0_thm1} are strict for $\rho_1\neq 0$, or $(a_0)$ and $(a_0')$ in \eqref{eq::Equal_capacity_thm1} are strict for $\rho_1 = 0$.
\end{theorem}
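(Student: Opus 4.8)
The plan is to reduce every inequality in the theorem to a single Jensen-type argument applied to the distribution of the random variable $Z$ furnished by the representation $E_0(\rho, W) = -\log \Expt[g(\rho, Z)]$ in \eqref{eq:Eo}. The two reference channels are exactly the two extreme laws for $Z$ on $[0,1]$: by Fact \ref{fact:Z_bec_bsc} the $BSC$ corresponds to a point mass at a constant $z_{BSC}$, while the $BEC$ corresponds to a two-point law supported on the endpoints $\{0,1\}$, i.e.\ on the extreme points of the range of $z$. Before invoking Jensen I would first use Lemma \ref{lem::cando} to replace the given $BSC$ and $BEC$ by channels whose $E_0$ at $\rho_1$ matches $E_0(\rho_1, W)$ exactly; the passage from these matched channels back to the hypothesized weaker inequalities $(a),(a')$ (or $(a_0),(a_0')$) is then handled purely by the monotone orderings of Lemma \ref{lem:ordering_bec_bsc} and Fact \ref{fact:ordering_bec_bsc}, which also accounts for the reversal of those orderings across $\rho = 0$.

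The heart of the argument is a convexity lemma for a composite function. For fixed $\rho_1$ the map $z \mapsto g(\rho_1, z)$ is monotone on $[0,1]$ by Lemma \ref{lem:g_concavity}, so I may set $u = g(\rho_1, z)$ and study $\phi(u) := g\big(\rho_2, g^{-1}(\rho_1, u)\big)$ on the range of $u$. The claim I would establish is that $\phi$ is concave in $u$ precisely on the parameter ranges appearing in Part 1, and convex on those of Parts 2--3 with the reversed conclusions. Granting this, Jensen's inequality applied to $U = g(\rho_1, Z)$, whose mean $\Expt[U]$ is pinned down by the matching, identifies the two extremes at once: the point mass (the $BSC$) attains equality in $\Expt[\phi(U)] \le \phi(\Expt[U])$ and hence maximizes $\Expt[g(\rho_2, Z)]$, i.e.\ minimizes $E_0(\rho_2, \cdot)$, while the two-point law on the endpoints (the $BEC$) minimizes $\Expt[g(\rho_2, Z)]$ and hence maximizes $E_0(\rho_2, \cdot)$. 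This yields \eqref{eq:P1_E} and its analogues; the sign of $\partial^2 \phi / \partial u^2$ is exactly what flips between the Parts, and the restriction $\rho_2 \le 3$ is dictated by the threshold $\rho_{\max}(z) \ge 3$ of Lemma \ref{lem::bsc_bec}, beyond which the two reference curves may cross again and the ordering need no longer persist.

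For the rate inequalities \eqref{eq:P1_R} and \eqref{eq:P2_R} I would work from the canonical form \eqref{eq::R_canonical}, $R(\rho_2, W) = \Expt[-\partial g(\rho_2, Z)/\partial\rho]\,/\,\Expt[g(\rho_2, Z)]$. Because this is a ratio rather than a single expectation, a plain Jensen bound does not suffice; instead I would prove a second convexity statement, namely that $-\partial g(\rho_2, g^{-1}(\rho_1, u))/\partial\rho$ carries a definite second-order sign in $u$ on the same ranges, and combine it with the control on the denominator already obtained. I expect this ratio step to be the main obstacle, since numerator and denominator must be handled jointly and their extremizing laws shown to coincide with the point mass and the endpoint law; the monotonicity of the auxiliary function $h(\rho,z)$ recorded in Lemma \ref{lem::bsc_bec} is what ultimately guarantees the requisite sign throughout $\rho_2 \in [\rho_1, 3]$.

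Finally, for Part 3 I would iterate the $E_0$ conclusion, chaining the extremality of $E_0(\rho_2, \cdot)$ through intermediate matching points by applying Lemma \ref{lem::cando} afresh at each juncture. The reversal of direction in \eqref{eq:P3_E3} for $\rho_1 > 1$ is precisely the scenario of Lemma \ref{lem::bsc_bec} in which the two reference curves intersect a second time in $(1,\infty)$, so that a crossing at $\rho_1 > 1$ forces the opposite ordering on $[0,1]$; this is the structural reason why no single extremality can be stated uniformly for all $\rho_2$, and it is cleanest to organize Part 3 case-by-case around that crossing. The strictness addendum follows by tracking the equality conditions in Jensen: the inequalities are strict unless $Z$ is itself degenerate, which is exactly the case in which $(a),(a')$ hold with equality.
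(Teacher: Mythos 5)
Your toolkit is the right one (the $Z$-representation \eqref{eq:Eo}, matching via Lemma \ref{lem::cando}, two-sided Jensen, and the degradation orderings of Lemma \ref{lem:ordering_bec_bsc}), and for Part 3 your plan coincides with the paper's actual proof: a single application of Jensen to $f_{\rho_1,\rho_2}(t)=g(\rho_2,g^{-1}(\rho_1,t))$ using the convexity/concavity ranges of Lemma \ref{lem:f_concavity}, followed by the ordering step. But there is a genuine gap in your treatment of Parts 1 and 2. You assert that $\phi(u)=g(\rho_2,g^{-1}(\rho_1,u))$ is concave ``precisely on the parameter ranges appearing in Part 1,'' and that the derivative composite $-\partial g(\rho_2,g^{-1}(\rho_1,u))/\partial\rho_2$ has a definite second-order sign ``on the same ranges.'' Neither claim is established, and neither is covered by the paper's lemmas: Lemma \ref{lem:f_concavity} says nothing about $\rho_1\in(1,3]$ with $\rho_2\ge\rho_1$, and Lemma \ref{lem:ftilde} proves two-parameter concavity of $\tilde f_{\rho_1,\rho_2}$ only for $\rho_1,\rho_2\in[0,1]$, $\rho_2\ge\rho_1$ --- the off-diagonal analysis there hinges on Lemma \ref{lem::F_monotonicity}, which controls $F(k,\rho_2)-F(k,\rho_1)$ only for $\rho\in[0,1]$; no two-parameter statement exists at all for negative $\rho$. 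So your single-Jensen plan covers Part 1 only when $\rho_1\in[0,1]$ and $\rho_2\le 1$ for the rates; for $\rho_1\in(1,3]$ with $\rho_2\in[\rho_1,3]$, for the rate inequalities \eqref{eq:P1_R} whenever $\rho_2>1$, and for Part 2's rate inequalities \eqref{eq:P2_R}, you would have to prove new convexity statements that the paper deliberately avoids (and whose proof would be a substantial extension of Appendices II and IV, where the threshold $\rho^{*}(k)\ge 3$ arises).

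The paper's idea that your proposal is missing is a propagation argument that needs only the \emph{diagonal} case $\rho_1=\rho_2$. It first proves Lemmas \ref{lem::R_extr_equal_rho_1} and \ref{lem::R_extr_equal_rho_2} (which use only the diagonal concavity/convexity of $\tilde f_\rho$, valid on all of $(0,3]$ and $(-1,0]$ respectively), then sets $D(\rho)=E_0(\rho,W)-E_0(\rho,BEC)$ and shows by an elementary mean-value argument that $D(\rho_1)<0$ together with the implication $\left(D(\rho)<0\Rightarrow D'(\rho)<0\right)$ forces $D(\rho_2)<0$ for all $\rho_2\in[\rho_1,3]$; the rate inequality at $\rho_2$ then follows by invoking the diagonal lemma once more \emph{at} $\rho_2$. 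This also dissolves the ratio difficulty you flag: once $E_0$ is matched at $\rho_2$ itself, the denominators in \eqref{eq::R_canonical} coincide and only the numerator needs Jensen, so no joint control of numerator and denominator at distinct parameters is ever required. Two further corrections: your strictness argument is misattributed --- it cannot come from equality conditions in Jensen (the inequalities must remain strict even when $W$ is itself a BSC, i.e.\ $Z$ degenerate), and instead follows from the strict monotonicity of $E_0$ and $R$ in the erasure/crossover probabilities (Lemma \ref{lem:ordering_bec_bsc}) once $(a)$ and $(a')$ are strict; and the case $\rho_1=0$, where hypothesis \eqref{eq::Equal_capacity_thm1} constrains limits of $E_0(\rho,\cdot)/\rho$ rather than values of $E_0$, requires the separate continuity/limiting argument the paper supplies, which your sketch omits.
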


\begin{remark}\label{rem::r0}
In Theorem \ref{thm:Extremality_Results}, the inequalities $(a)$-$(a_0)$ imply the inequalities  
$(b)$, $(c)$, $(d)$, $(e)$, $(f)$, $(g)$, $(h)$, and $(i)$. Similarly, the inequalities $(a')$-$(a_0')$ imply the inequalities $(b')$ through $(i')$.
\end{remark}

\begin{remark}\label{rem:rho_3}
The value of ``$3$'' that appears in the interval in Part 1 of the theorem is a conservative estimate. The reader who follows the proof of Lemma \ref{lem:ftilde}, which is stated in Section \ref{sub:CL} and proved in Appendix II, will notice that this ``$3$'' may be replaced by a $\rho^*(W)$ that depends on the channel $W$. In the  
proof of Lemma \ref{lem:ftilde}, it is shown that $\rho^*(W)\geq 3$ for any $W$, but the lower bound is not necessarily tight. We chose the value 3 so as to not further complicate the statement of the theorem.
\end{remark}

For the special case where $\rho_1 = \rho_2 = \rho$, for $\rho\in[0,1]$, we recover in the next corollary, a result obtained in~\cite{notes1}.

\newtheorem{corollary}{Corollary}
\begin{corollary}[\cite{notes1}]\label{cor::Extremality_unpublished}
Given a symmetric B-DMC $W$, for any fixed value of $\rho \in [0, 1]$, find a binary symmetric channel $BSC$, and a binary erasure channel $BEC$ through
the equality 
\begin{equation}\label{eq::eq_R_cor1}
 R(\rho, W) = R(\rho, BEC) = R(\rho, BSC).  
\end{equation}
Then, 
\begin{align}
 \label{align::E0_order_cor1}&E_0(\rho, BEC) \leq E_0(\rho, W) \leq E_0(\rho, BSC), \\
 \label{align::Er_order_cor1}&E_r(\rho, BEC) \leq E_r(\rho, W) \leq E_r(\rho, BSC).  
\end{align}
\end{corollary}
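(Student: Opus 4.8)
The plan is to deduce the corollary from Theorem \ref{thm:Extremality_Results} by converting its hypothesis, which fixes the \emph{rate} $R$ at $\rho$, into the ``equal $E_0$ at $\rho$'' hypothesis that the theorem actually consumes. The bridge is the monotonicity recorded in Lemma \ref{lem:ordering_bec_bsc}: for fixed $\rho\in(0,1]$, both $E_0(\rho,\cdot)$ and $R(\rho,\cdot)$ are strictly decreasing in the crossover probability along the BSC family, and both strictly decreasing in the erasure probability along the BEC family, so within each family $E_0$ and $R$ are comonotone. The case $\rho=0$ is trivial, since $E_0(0,\cdot)=0$ for every channel and hence $E_r(0,\cdot)=0$; so I fix $\rho\in(0,1]\subseteq[0,3]$ from here on.

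For the upper bound $E_0(\rho,W)\leq E_0(\rho,BSC)$, I would first invoke Lemma \ref{lem::cando} to produce an auxiliary $BSC^{\star}$ with $E_0(\rho,BSC^{\star})=E_0(\rho,W)$. Applying Theorem \ref{thm:Extremality_Results} (Part 1 with $\rho_1=\rho_2=\rho$) together with the one-sided implication of Remark \ref{rem::r0} --- inequality $(a)$, here holding with equality, yields inequality $(b)$ --- gives $R(\rho,BSC^{\star})\leq R(\rho,W)$. Since the corollary's $BSC$ satisfies $R(\rho,BSC)=R(\rho,W)$, I get $R(\rho,BSC^{\star})\leq R(\rho,BSC)$, whence comonotonicity forces $x_{BSC}\leq x_{BSC^{\star}}$ and therefore $E_0(\rho,W)=E_0(\rho,BSC^{\star})\leq E_0(\rho,BSC)$. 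The lower bound $E_0(\rho,BEC)\leq E_0(\rho,W)$ is symmetric: take $BEC^{\star}$ with $E_0(\rho,BEC^{\star})=E_0(\rho,W)$ via Lemma \ref{lem::cando}, use the $(a')\Rightarrow(b')$ half of Remark \ref{rem::r0} to obtain $R(\rho,W)\leq R(\rho,BEC^{\star})$, combine with $R(\rho,BEC)=R(\rho,W)$ to get $\epsilon_{BEC}\geq\epsilon_{BEC^{\star}}$, and conclude $E_0(\rho,BEC)\leq E_0(\rho,BEC^{\star})=E_0(\rho,W)$.

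The $E_r$ ordering then follows with no extra work from the parametric identity $E_r(\rho,\cdot)=E_0(\rho,\cdot)-\rho R(\rho,\cdot)$: because all three channels share the common value $R(\rho,W)=R(\rho,BEC)=R(\rho,BSC)$, the same quantity $\rho R(\rho,W)$ is subtracted from each $E_0$, so the $E_r$ inequalities are exactly the $E_0$ inequalities just established. I expect the only genuine obstacle to be bookkeeping of the inequality directions: the corollary's ordering $E_0(\rho,BEC)\leq E_0(\rho,W)\leq E_0(\rho,BSC)$ is \emph{reversed} relative to the $E_0$ ordering appearing in Theorem \ref{thm:Extremality_Results}, and one must verify that this reversal is exactly what the comonotone $R$-to-$E_0$ translation produces (matching in $R$ rather than in $E_0$ is what flips which channel is extremal). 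Some care is also needed to confirm that the existence statements of Lemma \ref{lem::cando} apply at the chosen $\rho$ and that the degeneracy at $\rho=0$ is disposed of separately.
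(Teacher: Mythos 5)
Your proof is correct, but it takes a genuinely different route from the paper's. The paper argues by contradiction in two lines: since $E_r(\rho,\cdot)=E_0(\rho,\cdot)-\rho R(\rho,\cdot)$ and the three channels share a common $R$, only the $E_0$ ordering needs proof; and if, contrary to the claim, $E_0(\rho,BSC)<E_0(\rho,W)$ (or $E_0(\rho,W)<E_0(\rho,BEC)$) held, then Part 1 of Theorem \ref{thm:Extremality_Results} with $\rho_1=\rho_2=\rho$ --- specifically its final clause, that strict hypotheses yield strict conclusions when $\rho_2\neq 0$ --- would force $R(\rho,BSC)<R(\rho,W)$ (or $R(\rho,W)<R(\rho,BEC)$), contradicting \eqref{eq::eq_R_cor1}. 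Your argument is instead direct: you manufacture matched-$E_0$ auxiliary channels $BSC^{\star}$ and $BEC^{\star}$ via Lemma \ref{lem::cando}, apply only the non-strict one-sided implications of Remark \ref{rem::r0}, and then invert along the BSC and BEC families using the strict monotonicity of $R$ and $E_0$ in the crossover/erasure probability from Lemma \ref{lem:ordering_bec_bsc}. The trade-off is clear: the paper's contradiction hinges essentially on the strictness clause of the theorem (with only weak inequalities, $R(\rho,BSC)\leq R(\rho,W)$ is no contradiction to equality), whereas your route avoids that clause entirely but imports Lemmas \ref{lem::cando} and \ref{lem:ordering_bec_bsc}; in effect you re-run, at the level of the corollary, the same auxiliary-channel-plus-ordering device that the paper uses \emph{inside} the proof of Lemma \ref{lem::R_extr_equal_rho_1}. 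Both routes rest on the same diagonal case $\rho_1=\rho_2=\rho$ of the theorem and dispose of the $E_r$ claim identically; your separate treatment of the degenerate point $\rho=0$, where all the quantities vanish, is a detail the paper leaves implicit.
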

\begin{proof}
Since $E_r(\rho,W)=E_0(\rho,W)-\rho R(\rho,W)$, it suffices to prove the
first set of inequalities in view of~\eqref{eq::eq_R_cor1}.
Taking $\rho_1=\rho_2=\rho$, \eqref{align::E0_order_cor1} holds by Theorem \ref{thm:Extremality_Results}. To see this, observe that had the channels on the contrary satisfied 
\begin{equation}
E_0(\rho, BSC) < E_0(\rho, W) < E_0(\rho, BEC),
\end{equation}
the results in Part 1 of the theorem would imply
\begin{equation}
R(\rho, BSC) < R(\rho, W) < R(\rho, BEC),
\end{equation}
contradicting the assumption \eqref{eq::eq_R_cor1} of the corollary. 
\end{proof}

Another particular case of Theorem \ref{thm:Extremality_Results} when $\rho_1=0$ recovers the result in~\cite{6034105}: amongst all symmetric B-DMCs of the same capacity, 
the BEC and the BSC are extremal with respect to the random coding exponent.

\begin{corollary}[Theorem 2.3~\cite{6034105}]\label{cor::Extremality_same_capacity}
Given a symmetric B-DMC $W$ of capacity $I(W)$, we define a binary symmetric channel $BSC$, and a binary erasure channel $BEC$ of the same capacity through 
the equality 
\begin{equation*}
 I(W) = I(BEC) = I(BSC).
\end{equation*}
Then, the random coding error exponent of the channels satisfy 
\begin{equation}\label{eq::Er_Extremality}
 E_r(R, BSC) \leq E_r(R, W) \leq E_r(R, BEC).  
\end{equation}
\end{corollary}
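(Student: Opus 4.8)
The plan is to recognize the capacity hypothesis as precisely the $\rho_1=0$ instance of Theorem~\ref{thm:Extremality_Results}, and then to transfer the resulting pointwise ordering of the three $E_0$ curves into an ordering of the random coding exponents. Recall from \eqref{eq::I_def} that, since $E_0(0,W)=0$, the capacity is the slope of the $E_0$ curve at the origin, $I(W)=\lim_{\rho\to0}E_0(\rho,W)/\rho$. Hence the assumption $I(W)=I(BEC)=I(BSC)$ says exactly that the inequalities $(a_0)$ and $(a_0')$ of \eqref{eq::Equal_capacity_thm1} both hold, in fact with equality, which places us in the hypothesis of the theorem with $\rho_1=0$.

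Applying Part~1 of Theorem~\ref{thm:Extremality_Results} (equivalently the first case of Part~3, which is available since $\rho_1=0\in[0,1]$), I would invoke \eqref{eq:P1_E} (or \eqref{eq:P3_E2}) to conclude
\begin{equation*}
E_0(\rho, BSC) \leq E_0(\rho, W) \leq E_0(\rho, BEC)
\end{equation*}
for every $\rho$ in the range required below, namely $\rho\in[0,1]$. The remaining step is to convert this pointwise ordering of the $E_0$ curves into the ordering of the exponents through the variational representation \eqref{eq::Er_R_W}, which is valid for all three channels because $W$, $BEC$, and $BSC$ are symmetric.

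Concretely, I would fix $R\geq0$ and note that for each $\rho\in[0,1]$ the displayed ordering gives
\begin{equation*}
E_0(\rho, BSC)-\rho R \leq E_0(\rho, W)-\rho R \leq E_0(\rho, BEC)-\rho R.
\end{equation*}
Since taking the maximum over $\rho\in[0,1]$ preserves inequalities, \eqref{eq::Er_R_W} immediately yields $E_r(R, BSC)\leq E_r(R, W)\leq E_r(R, BEC)$, which is exactly the claim \eqref{eq::Er_Extremality}.

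I do not anticipate a genuine obstacle: all the analytic content is already carried by Theorem~\ref{thm:Extremality_Results}, and the only additional ingredient is the monotonicity of the pointwise maximum, which is routine. The single point worth checking is bookkeeping, namely that the range over which the theorem supplies the $E_0$ ordering (all of $[0,3]$ from Part~1, or all $\rho\geq0$ from Part~3) comfortably contains the maximization interval $[0,1]$ appearing in \eqref{eq::Er_R_W}. I would also remark that strictness plays no role here, so the fact that $(a_0)$ and $(a_0')$ hold with equality is harmless for the argument.
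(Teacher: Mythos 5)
Your proposal is correct and follows essentially the same route as the paper: identify the equal-capacity hypothesis with the $\rho_1=0$ case of Theorem~\ref{thm:Extremality_Results} via \eqref{eq::I_def}, invoke Part~1 to order the $E_0$ curves on $[0,1]$, and pass the ordering through the maximization in \eqref{eq::Er_R_W}. The only difference is that you spell out the final step (pointwise ordering is preserved by the maximum over $\rho$, and \eqref{eq::Er_R_W} applies since all three channels are symmetric), which the paper leaves implicit.
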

\begin{proof}
The equality of capacities is equivalent to 
\begin{equation*}
 \lim_{\rho\to 0}\displaystyle\frac{E_{0}(\rho, W)}{\rho} = \lim_{\rho\to 0}\displaystyle\frac{E_{0}(\rho, BEC)}{\rho} = \lim_{\rho\to 0}\displaystyle\frac{E_{0}(\rho, BSC)}{\rho}.
\end{equation*}
But in this case, we know by Part 1 in Theorem \ref{thm:Extremality_Results} that we have
\begin{equation}
E_0(\rho_2, BSC)\leq E_0(\rho_2, W)\leq E_0(\rho_2, BEC),
\end{equation}
for any $\rho_2\in[0,1]$. This, in turn, implies the inequality for the random coding exponent.
\end{proof}

Finally, note that in \cite{6377299} the above result of \cite{6034105} was extended to the region where $\rho>-1$. Namely, amongst all symmetric B-DMCs of the same capacity, the BEC and the BSC are extremal with
\begin{equation}
E_{0}(\rho, BSC) \leq E_{0}(\rho, W) \leq E_{0}(\rho, BEC),
\end{equation} 
for all $\rho>-1$. In particular, \cite[Theorem 1]{6377299} can also be recovered from Theorem \ref{thm:Extremality_Results}.

\subsection{Convexity Lemmas}\label{sub:CL}
The proof of Theorem \ref{thm:Extremality_Results} rests on the next two lemmas. The lemmas are proved in the Appendix.
\begin{lemma}\label{lem:ftilde}
For fixed values of $\rho_1, \rho_2\in\mathbf{R}\setminus\{-1\}$, we define the function $\tilde{f}_{\rho_1,\rho_2}(t)$ by
\begin{equation}
 \tilde{f}_{\rho_1,\rho_2}(t) = \displaystyle\frac{\partial}{\partial\rho_2} g(\rho_2, g^{-1}(\rho_1, t)),  
\end{equation}
for $t\in[2^{-\rho}, 1]\cup[1, 2^{-\rho}]$. Let $\tilde{f}_{\rho}(t)$ denotes the function when $\rho_1 = \rho_2 = \rho$. 
Then, $\tilde{f}_{\rho}(t)$ is a concave function in $t$ when $\rho\in(0, 3]$, convex when $\rho = (-1, 0]$ and $\rho\in(-\infty, -1)$.
Moreover, the function $\tilde{f}_{\rho_1,\rho_2}(t)$ is concave when $\rho_1, \rho_2\in[0, 1]$ such that $\rho_2 \geq \rho_1$.
\end{lemma}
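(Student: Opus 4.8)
The plan is to reduce concavity/convexity to a one-dimensional sign condition via a change of variables, and then verify that condition by explicit computation. Throughout write $g_z,g_{zz},g_\rho,g_{\rho z},g_{\rho zz}$ for the partial derivatives of $g$ (on the interior $(-1,\infty)\times(0,1)$ all mixed partials commute). By Lemma~\ref{lem:g_concavity} the map $z\mapsto g(\rho_1,z)$ is a strictly monotone smooth bijection of $[0,1]$ onto its range, so I parametrize $\tilde f_{\rho_1,\rho_2}$ by $z=g^{-1}(\rho_1,t)$: setting $F(z)=g_\rho(\rho_2,z)$ and $t=g(\rho_1,z)$ we have $\tilde f_{\rho_1,\rho_2}(t)=F(z)$ with $dt=g_z(\rho_1,z)\,dz$. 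Two applications of the chain rule give
\[
\frac{d^2\tilde f_{\rho_1,\rho_2}}{dt^2}
=\frac{g_{\rho zz}(\rho_2,z)\,g_z(\rho_1,z)-g_{\rho z}(\rho_2,z)\,g_{zz}(\rho_1,z)}{g_z(\rho_1,z)^{3}}=:\frac{N(z)}{g_z(\rho_1,z)^{3}}.
\]
Since the denominator's sign is fixed by Lemma~\ref{lem:g_concavity} ($g_z(\rho_1,\cdot)\le 0$ for $\rho_1\in(-\infty,-1)\cup[0,\infty)$ and $\ge 0$ for $\rho_1\in(-1,0]$), everything reduces to the sign of $N(z)$ on $(0,1)$.

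Next I would untangle $N$ by an algebraic identity. Introduce $L(\rho,z)=\partial_z\log|g_z|=g_{zz}/g_z$ and $M(\rho,z)=\partial_\rho\log|g_z|=g_{\rho z}/g_z$, and note $\partial_zM=\partial_\rho L$. Expanding $g_{\rho zz}=\partial_z(g_zM)=g_{zz}M+g_zM_z$ and factoring yields
\[
N(z)=g_z(\rho_1,z)\,g_z(\rho_2,z)\Big[M(\rho_2,z)\big(L(\rho_2,z)-L(\rho_1,z)\big)+\partial_\rho L(\rho_2,z)\Big].
\]
On the diagonal $\rho_1=\rho_2=\rho$ the bracket collapses to $\partial_\rho L$, so $N=g_z^2\,\partial_\rho L$ and $\tilde f_\rho''=\partial_\rho L/g_z$. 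Matching against the three denominator regimes, the diagonal claim becomes exactly: $\partial_\rho L(\rho,z)\ge 0$ for $\rho\in(-1,0)\cup(0,3]$ and $\partial_\rho L(\rho,z)\le 0$ for $\rho<-1$. Using $g=h^{1+\rho}$ with $h=\tfrac12[(1+z)^{s}+(1-z)^{s}]$ and $s=1/(1+\rho)$, I would compute $L=\rho\,h_z/h+h_{zz}/h_z$ in closed form and differentiate in $\rho$ (recalling $\partial_\rho s=-s^2$) to obtain an explicit expression whose sign I verify; the threshold $\rho=3$, i.e. the channel-dependent $\rho^*(W)$ of Remark~\ref{rem:rho_3}, is precisely where $\partial_\rho L$ first changes sign.

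For the off-diagonal case $0\le\rho_1\le\rho_2\le 1$ both factors $g_z(\rho_i,\cdot)\le 0$, so $g_z(\rho_1)g_z(\rho_2)\ge 0$ while $g_z(\rho_1)^3\le 0$; hence concavity ($\tilde f''\le 0$) is equivalent to the bracket $M(\rho_2)\big(L(\rho_2)-L(\rho_1)\big)+\partial_\rho L(\rho_2)$ being nonnegative. Here I assemble the sign from pieces already in hand: the diagonal computation gives $\partial_\rho L\ge 0$ throughout $[0,1]\subset(0,3]$, which both makes the last term nonnegative and, by integrating in $\rho$ from $\rho_1$ to $\rho_2\ge\rho_1$, yields $L(\rho_2,z)-L(\rho_1,z)\ge 0$ pointwise in $z$. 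It then suffices to show $M(\rho,z)=\partial_\rho\log|g_z|\ge 0$ for $\rho\in[0,1]$, a one-dimensional sign check on the same explicit function. This is exactly where the hypothesis $\rho_2\ge\rho_1$ enters: for $\rho_2<\rho_1$ the factor $L(\rho_2)-L(\rho_1)$ flips sign and the argument breaks.

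The main obstacle is the pair of explicit sign verifications above—$\partial_\rho L\ge0$ on $(0,3]$ with the sharp threshold, and $M\ge0$ on $[0,1]$. After the reductions these are single-variable calculus problems in $h,h_z,h_{zz}$, but pinning down the constant $3$ and controlling the signs uniformly over $z\in(0,1)$ is delicate; this is presumably why the detailed computation is deferred to Appendix II and why the conservative value $3$, rather than a tight $\rho^*(W)$, is quoted in the theorem.
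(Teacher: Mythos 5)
Your reduction is correct, and it is essentially the paper's own decomposition in different notation: your $L=\partial_z\log\lvert g_z\rvert$ and $M=\partial_\rho\log\lvert g_z\rvert$ are, up to the change of variable $k=h(z)$ and the factor $\partial k/\partial z<0$, the paper's $F(k,\rho)$ and $\Psi(k,\rho)$, and your identity
\begin{equation*}
\frac{d^2\tilde f_{\rho_1,\rho_2}}{dt^2}
=\frac{g_z(\rho_1,z)\,g_z(\rho_2,z)}{g_z(\rho_1,z)^{3}}
\Bigl[M(\rho_2,z)\bigl(L(\rho_2,z)-L(\rho_1,z)\bigr)+\partial_\rho L(\rho_2,z)\Bigr]
\end{equation*}
is exactly the paper's factorization $\partial_k(\Phi\Psi)=\Phi\Psi\bigl(\partial_k\log\Phi+\Psi'/\Psi\bigr)$ multiplied through by $\frac{\partial k}{\partial z}\frac{\partial z}{\partial t}$; your sign bookkeeping in the three diagonal regimes and in the off-diagonal case (including where $\rho_2\ge\rho_1$ enters) checks out. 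One step of yours is a genuine improvement over the paper: to get $L(\rho_2,z)\ge L(\rho_1,z)$ (equivalently, the paper's $F(k,\rho)$ decreasing in $\rho$) the paper invokes a separate, fairly laborious monotonicity result (Lemma \ref{lem::F_monotonicity}), whereas you obtain it for free by integrating $\partial_\rho L=\partial_z M\ge 0$ over $[\rho_1,\rho_2]\subset(0,3]$ --- the very same sign fact already needed for the diagonal case --- which would render Lemma \ref{lem::F_monotonicity} unnecessary.

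The genuine gap is that the two sign facts carrying all the weight are asserted, never proved: (i) $\partial_\rho L(\rho,z)\ge0$ for all $z$ when $\rho\in(-1,0)\cup(0,3]$ and $\partial_\rho L(\rho,z)\le0$ when $\rho<-1$, the crux being that the sign change in $\rho$ occurs at a threshold $\rho^{*}\ge 3$ \emph{uniformly in} $z$; and (ii) $M(\rho,z)\ge0$ for $\rho\in(0,1]$. These are precisely the content of the paper's Lemma \ref{lem::sign_m}: the sign of $\partial_\rho L$ reduces to the sign of
$m(k,\rho)=-1+k-k^{\frac{1}{1+\rho}}+k^{\frac{\rho}{1+\rho}}-\bigl(k^{\frac{\rho}{1+\rho}}+k^{\frac{1}{1+\rho}}\bigr)\log k^{\frac{\rho}{1+\rho}}$,
whose analysis requires the substitutions $t=\rho/(1+\rho)$, $s=-t\log k$, a piecewise monotonicity argument in $\rho$, and then a chain of four nested derivative computations just to establish $m(k,3)\le0$ for all $k\in[0,1]$, i.e.\ $\rho^{*}(k)\ge3$; fact (ii) then follows from (i) plus the boundary evaluation $M(\rho,z)\bigl\lvert_{z=0}=1/(\rho(1+\rho))>0$ (the paper's $\lim_{k\to1}\Psi(k,\rho)$). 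Your proposal explicitly defers exactly these verifications (``the main obstacle''), but they are the mathematical core of the lemma --- in particular, the only place the constant $3$ in the statement comes from --- so as written your argument reduces the lemma to these facts rather than establishing it.
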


\begin{lemma}\label{lem:f_concavity}
For fixed values of $\rho_1, \rho_2\in\mathbf{R}\setminus\{-1\}$, the function $f_{\rho_1,\rho_2}(t)$ defined as
\begin{equation}
 f_{\rho_1,\rho_2}(t) = g(\rho_2, g^{-1}(\rho_1, t)),  
\end{equation}
for $t\in[2^{-\rho}, 1]\cup[1, 2^{-\rho}]$, is concave in $t$ when $\rho_1\in(-1, 0]$ and $\rho_2 \geq 0$, when $\rho_1\in[0, 1]$ and $\rho_2 \geq \rho_1$, and when
$\rho_1 > 1$ and $\rho_2\in(-1, 0)$, and the function is convex when $\rho_1 > 1$ and $\rho_2\in(0, 1]$.
\end{lemma}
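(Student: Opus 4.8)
The plan is to establish the sign of $F''(t)$, where $F(t) := f_{\rho_1,\rho_2}(t)$, by reducing the whole statement to a single monotonicity property of one explicit one-variable function. Writing $z = g^{-1}(\rho_1,t)$, so that $t = g(\rho_1,z)$ and $F(t) = g(\rho_2,z)$, and abbreviating $g_z(\rho,z) = \partial g(\rho,z)/\partial z$, differentiation through the inverse gives
\begin{equation}
F'(t) = \frac{g_z(\rho_2,z)}{g_z(\rho_1,z)} =: \Lambda(z).
\end{equation}
By Lemma \ref{lem:g_concavity}, on $z \in (0,1)$ the factors $g_z(\rho_1,\cdot)$ and $g_z(\rho_2,\cdot)$ each keep a constant sign ($\le 0$ for $\rho \ge 0$, $\ge 0$ for $\rho \in (-1,0)$) and $g(\rho_1,\cdot)$ is strictly monotone, so $z$ is a strictly monotone function of $t$ with a known orientation. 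Consequently $F$ is concave (resp.\ convex) exactly when $\Lambda$ is monotone in $z$ in the appropriate direction, and the whole problem reduces to the sign of $d\Lambda/dz$, keeping careful track of the signs of $g_z(\rho_1,z)$ and of $\Lambda(z)$ itself.

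First I would put $g_z$ in closed form. With the substitution $r = (1-z)/(1+z) \in (0,1)$, which is strictly decreasing in $z$, and $a = 1/(1+\rho)$, the prefactor exponent $a(1+\rho)-1$ vanishes identically and one obtains the compact expression $g_z(\rho,z) = 2^{-(1+\rho)}\left(1+r^{a}\right)^{\rho}\left(1-r^{a-1}\right)$. Taking logarithms and differentiating in $r$, the two cross terms combine and collapse, yielding
\begin{equation}
\frac{d}{dr}\log|\Lambda| = \Phi(\rho_1,r) - \Phi(\rho_2,r), \qquad \Phi(\rho,r) := -\frac{\partial}{\partial r}\log|g_z(\rho,z)| = \frac{(a-1)(1+r)\,r^{a-2}}{(1+r^{a})(1-r^{a-1})}.
\end{equation}
Via the chain rule $F''(t) = \dfrac{g_z(\rho_2,z)}{g_z(\rho_1,z)^2}\,\big(\Phi(\rho_1,r)-\Phi(\rho_2,r)\big)\,\dfrac{dr}{dz}$ and $dr/dz < 0$, this gives
\begin{equation}
\sgn F''(t) = -\,\sgn\!\big(g_z(\rho_2,z)\big)\,\sgn\!\big(\Phi(\rho_1,r)-\Phi(\rho_2,r)\big),
\end{equation}
so everything comes down to comparing the single explicit function $\Phi(\cdot,r)$ at $\rho_1$ and $\rho_2$.

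The heart of the argument, and the step I expect to be the main obstacle, is the behaviour of $\Phi(\rho,r)$ in $\rho$ for fixed $r$. It is \emph{not} globally monotone; rather, I would show that $\partial_\rho \Phi(\rho,r)$ changes sign exactly once, so that $\Phi(\cdot,r)$ is increasing on $(-1,\rho^*(r))$ and decreasing on $(\rho^*(r),\infty)$ for some threshold $\rho^*(r)\ge 3$, together with the identity $\Phi(1,r)=\lim_{\rho\to\infty}\Phi(\rho,r)=\tfrac{1+r}{2r(1-r)}$. This unimodality-plus-endpoint structure is exactly the phenomenon already seen for the function $h$ in Lemma \ref{lem::bsc_bec} and is the source of the ``$3$'' flagged in Remark \ref{rem:rho_3}; proving that this single derivative changes sign only once is the genuinely tedious computation I would relegate to the appendix.

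Granting this structure, in each of the four regimes the sign of $\Phi(\rho_1,r)-\Phi(\rho_2,r)$ agrees with $\sgn(\rho_1-\rho_2)$. Indeed, when $\rho_1\le\rho_2$ (the first two regimes, where $\rho_1\le 1$), using that $\Phi(\cdot,r)$ increases up to $\rho^*(r)\ge 3$ and never drops below $\Phi(1,r)=\Phi(\infty,r)$ on $[1,\infty)$ forces $\Phi(\rho_1,r)\le\Phi(\rho_2,r)$; when $\rho_1>1$ (the last two regimes), the bound $\Phi(\rho_1,r)\ge\Phi(1,r)=\Phi(\infty,r)\ge\Phi(\rho_2,r)$ holds because $\rho_2\le 1$. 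Substituting these signs, together with $\sgn(g_z(\rho_2,z))$ (positive for $\rho_2\in(-1,0)$ and negative for $\rho_2>0$, by Lemma \ref{lem:g_concavity}), into the formula for $\sgn F''$ yields concavity in the first three regimes and convexity in the last, as claimed. Finally, the degenerate value $\rho=0$ (where $g(0,\cdot)\equiv 1$ and the inverse is trivial) and the endpoint $z=0$ (where $g_z$ vanishes) are handled by passing to the limit and invoking continuity of $f_{\rho_1,\rho_2}$, so that the concavity or convexity established on the open interval extends to the full closed range of $t$.
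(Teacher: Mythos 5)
Your proposal follows essentially the same route as the paper's own proof in Appendix III, and every step you actually carry out is correct. Your $\Lambda(z)$ is the paper's ratio $g'(\rho_2,z)/g'(\rho_1,z)$; your substitution $r=(1-z)/(1+z)$ is the paper's $k=h(z)$; your $\Phi(\rho,r)=-\partial_r\log\lvert g_z\rvert$ equals $-F(r,\rho)$ in the paper's notation \eqref{eq:F}, equivalently the paper's $\ell(\rho,z)$ up to the positive factor $\lvert dr/dz\rvert$, so your comparison $\Phi(\rho_1,r)-\Phi(\rho_2,r)$ is exactly the paper's quantity $\ell(\rho_2,z)-\ell(\rho_1,z)$ with the sign reversed; your endpoint identity $\Phi(1,r)=\Phi(\infty,r)=\tfrac{1+r}{2r(1-r)}$ is the paper's identity $\ell(1,z)=\lim_{\rho\to\infty}\ell(\rho,z)=1/(z-z^3)$ transported through the change of variables; and your regime-by-regime sign bookkeeping (including the orientation factors from Lemma \ref{lem:g_concavity}) reproduces the paper's four displayed sign products and reaches the correct concavity/convexity conclusions in all four regimes.

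There is, however, a genuine gap: the whole argument rests on the clause ``Granting this structure.'' The unimodality claim --- that $\partial_\rho\Phi(\rho,r)$ changes sign exactly once, at a threshold $\rho^*(r)\geq 3$, so that $\Phi(\cdot,r)$ increases on $(-1,\rho^*(r))$ and decreases thereafter --- is asserted but not proved, and it is not a side computation: it is the mathematical core of the lemma. In the paper it is established through the identity $\partial_\rho\ell(\rho,z)=\Psi'(k,\rho)\,\partial k/\partial z$ with $\Psi'$ given in \eqref{eq::psi_prime}, and then through Lemma \ref{lem::sign_m}, where the sign of $\Psi'$ is reduced to that of the function $m(k,\rho)$, whose single sign change and the bound $\rho^*(k)\geq 3$ require the substitutions $t=\rho/(1+\rho)$ and $s=-t\log k$ followed by four levels of nested derivative sign analysis. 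You correctly identified this as the main obstacle and even stated the needed structure precisely (it is true), but as written your proposal is a correct, well-organized reduction of Lemma \ref{lem:f_concavity} to that unimodality claim, not a complete proof; supplying the analogue of Lemma \ref{lem::sign_m} is the part that carries the real difficulty.
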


\subsection{Proof of Theorem \ref{thm:Extremality_Results}}\label{sub:Proof} 
Before we start proving the theorem's statement in its most general form, we will prove two particular cases of the theorem in the next two lemmas assuming $\rho_1=\rho_2=\rho$.

\begin{lemma}\label{lem::R_extr_equal_rho_1}
Given any fixed value of $\rho\in(0, 3)$, suppose a B-DMC $W$, a binary symmetric channel $BSC$, and a binary erasure channel $BEC$ satisfy
the equality 
\begin{equation}\label{eq::ineq_cond_E0}
E_0(\rho, BSC) \leq E_0(\rho, W) \leq E_0(\rho, BEC).
\end{equation}
Then, the following holds:
\begin{equation}\label{eq::R_extr_equal_rho_1}
 R(\rho, BSC) \leq R(\rho, W) \leq R(\rho, BEC),
\end{equation}
where the inequalities are strict if the inequalities in \eqref{eq::ineq_cond_E0} are strict.
\end{lemma}
\begin{proof}
Let us define another binary erasure channel $BEC^*$ and another binary symmetric channel $BSC^*$
through the following equality:
\begin{equation}\label{eq::eq_cond_E0}
E_0(\rho, BSC^*) = E_0(\rho, W) = E_0(\rho, BEC^*).
\end{equation}
Observe that by \eqref{eq:Eo}, the equality condition in Equation \eqref{eq::eq_cond_E0} is equivalent to the equality of
\begin{equation}\label{eq::eq_cond_g}
 \Expt\left[g(\rho, Z)\right] = \Expt\left[g(\rho, Z_{BEC^*})\right] = g(\rho, z_{BSC^*}). 
\end{equation}
Hence, the denominator in 
\begin{equation}\label{eq::R_denom}
 R(\rho, W) = \displaystyle\frac{\partial}{\partial\rho}E_0( \rho, W) = \displaystyle\frac{\Expt\left[-\partial g(\rho, Z)/\partial\rho\right]}{\Expt\left[g(\rho, Z)\right]}
\end{equation}
is the same for the three channels. Then, the proof can be completed using the concavity of the function $\tilde{f}_{\rho}(t)$ in $t$ for $\rho\in(0, 3]$, which was shown in Lemma \ref{lem:ftilde}, and the special structure of the $Z$ random variable of a BEC and a BSC. To see this, let us define the random variable $T = g(\rho, Z)\in[2^{-\rho}, 1]$.
Then, we note that $\tilde{f}_{\rho}(T) = \partial g(\rho, Z)/\partial\rho$, and $E[T]$ gives \eqref{eq::eq_cond_g}. So,
\begin{equation}
R(\rho, W) = \displaystyle\frac{\Expt\left[\tilde{f}_{\rho}(T)\right]}{E[T]}, \quad R(\rho, BSC^*) = \displaystyle\frac{\tilde{f}_{\rho}(E[T])}{E[T]}. 
\end{equation}
To derive the expression for $R(\rho, BEC^*)$, recall by Fact \ref{fact:Z_bec_bsc} that $Z_{BEC^*} = \{0, 1\}$. Using $E[T]=\Expt\left[g(\rho, Z_{BEC^*})\right]$, we get
\begin{equation}
P(Z_{BEC^*} = 0) = \displaystyle\frac{E[T]-1}{2^{-\rho}-1}.
\end{equation}
Hence,
\begin{equation}
R(\rho, BEC^*) = \displaystyle\frac{\tilde{f}_{\rho}(2^{-\rho}) P(Z_{BEC^*} = 0) + \tilde{f}_{\rho}(1)P(Z_{BEC^*} = 1)}{E[T]}.
\end{equation}
Now, by the two sides of the Jensen's inequality for concave functions we have
\begin{equation}\label{eq::jensen}
\tilde{f}_{\rho}(1) + \displaystyle\frac{\tilde{f}_{\rho}(1) - \tilde{f}_{\rho}(2^{-\rho})}{1-2^{-\rho}} \left( \Expt \left[T \right] - 1 \right) \leq \Expt\left[\tilde{f}_{\rho}(T)\right] 
\leq \tilde{f}_{\rho}(\Expt\left[T \right]).
\end{equation}
Dividing all sides by $\Expt\left[T\right]>0$ and negating the expressions in \eqref{eq::jensen}, we get
\begin{equation}\label{eq:final_step}
 R(\rho, BSC^*) \leq R(\rho, W) \leq R(\rho, BEC^*). 
\end{equation}
The final step of the proof is to show \eqref{eq:final_step} implies \eqref{eq::R_extr_equal_rho_1}. For that purpose, recall that by Fact 2 that the set of BSCs and the set of BECs are strictly ordered in their $E_0$ and $R$ parameters for $\rho\in(0, 3]$. As we have
\begin{align}
\label{align:ineq_str1}&E_0(\rho, BSC) \leq E_0(\rho, BSC^*), \\
\label{align:ineq_str2}&E_0(\rho, BEC^*) \leq E_0(\rho, BEC),
\end{align}
we conclude by Lemma \ref{lem:ordering_bec_bsc} that
\begin{align}
\label{align:ineq_str3}&R(\rho, BSC) \leq R(\rho, BSC^*), \\
\label{align:ineq_str4}&R(\rho, BEC^*) \leq R(\rho, BEC)
\end{align}
holds for $\rho \geq 0$. From this \eqref{eq::R_extr_equal_rho_1} follows. Moreover, if the inequalities in \eqref{eq::ineq_cond_E0} are strict than the ones in \eqref{align:ineq_str1} and \eqref{align:ineq_str2}, and thus, \eqref{align:ineq_str3} and \eqref{align:ineq_str4} are strict as well. Consequently, the inequalities in \eqref{eq::R_extr_equal_rho_1} hold strictly as claimed.
\end{proof}

\begin{remark}\label{rem:r4}
Note that Lemma \ref{lem::R_extr_equal_rho_1} and Corollary \ref{cor::Extremality_unpublished} are of the same flavor.
Indeed, one can easily derive one from the other using the degradation argument discussed in Fact 2. So, the result of~\cite{notes1} could also have been used to characterize the behavior of the $E_0$ curves for the $\rho\in(0, 1]$ interval.
However, the proofs of the lemma and the corollary are different as they involve different convexity analysis.
\end{remark}

\begin{lemma}\label{lem::R_extr_equal_rho_2}
Given any fixed value of $\rho\in(-1, 0)$, suppose a B-DMC $W$, a binary symmetric channel $BSC$, and a binary erasure channel $BEC$ satisfy the condition  \eqref{eq::ineq_cond_E0} of Lemma \ref{lem::R_extr_equal_rho_1}. Then, the following holds:
\begin{equation}\label{eq::R_extr_equal_rho_2}
 R(\rho, BEC) \leq R(\rho, W) \leq R(\rho, BSC),
\end{equation}
where the inequalities are strict if the inequalities in \eqref{eq::ineq_cond_E0} are strict.
\end{lemma}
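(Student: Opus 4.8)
The plan is to mirror the proof of Lemma~\ref{lem::R_extr_equal_rho_1} essentially line for line, the only structural change being that the sign of the convexity of $\tilde f_\rho$ is reversed on the relevant interval. First I would introduce an auxiliary erasure channel $BEC^*$ and symmetric channel $BSC^*$ defined by the \emph{equality} $E_0(\rho, BSC^*) = E_0(\rho, W) = E_0(\rho, BEC^*)$, which by \eqref{eq:Eo} is the same as $\Expt[g(\rho,Z)] = \Expt[g(\rho,Z_{BEC^*})] = g(\rho, z_{BSC^*})$. This matches the denominator $\Expt[g(\rho,Z)]$ in the representation \eqref{eq::R_canonical} across all three channels, so comparing the $R$ parameters reduces to comparing the numerators $\Expt[-\partial g(\rho,Z)/\partial\rho]$.

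Next I would set $T = g(\rho, Z)$, which for $\rho\in(-1,0)$ ranges in $[1, 2^{-\rho}]$, with the two mass points of the erasure channel sitting exactly at the endpoints $g(\rho,0)=1$ and $g(\rho,1)=2^{-\rho}$. Writing $\tilde f_\rho(T) = \partial g(\rho,Z)/\partial\rho$, the numerators become $\Expt[\tilde f_\rho(T)]$ for $W$, the point value $\tilde f_\rho(\Expt[T])$ for $BSC^*$, and the chord value of $\tilde f_\rho$ evaluated at $\Expt[T]$ for $BEC^*$ (the latter because $Z_{BEC^*}$ is supported on $\{0,1\}$, whose $g$-images are the interval endpoints). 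The crucial input is Lemma~\ref{lem:ftilde}, which asserts that $\tilde f_\rho$ is \emph{convex} on this interval for $\rho\in(-1,0]$, rather than concave as in Part~1. Convexity reverses the two-sided Jensen bound, giving $\tilde f_\rho(\Expt[T]) \leq \Expt[\tilde f_\rho(T)]$ on the one side and the chord value as an upper bound on the other; dividing by $\Expt[T]>0$ and negating (as dictated by the minus sign in \eqref{eq::R_canonical}) then flips these into $R(\rho,BEC^*) \leq R(\rho, W) \leq R(\rho, BSC^*)$. Thus it is precisely the sign of the second derivative of $\tilde f_\rho$ that interchanges the roles of the BEC and the BSC relative to Lemma~\ref{lem::R_extr_equal_rho_1}.

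Finally I would promote the starred channels to the given $BEC$ and $BSC$ using the monotonicity of Lemma~\ref{lem:ordering_bec_bsc} and Fact~\ref{fact:ordering_bec_bsc}. For $\rho\in(-1,0)$ the hypothesis \eqref{eq::ineq_cond_E0}, combined with $E_0$ being \emph{increasing} in the erasure and crossover probabilities, forces $\epsilon_{BEC^*}\leq\epsilon_{BEC}$ and $x_{BSC}\leq x_{BSC^*}$; since $R$ is \emph{decreasing} in these parameters for every $\rho>-1$, this yields $R(\rho,BEC)\leq R(\rho,BEC^*)$ and $R(\rho,BSC^*)\leq R(\rho,BSC)$. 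Chaining these with the inner inequalities produces $R(\rho,BEC)\leq R(\rho,W)\leq R(\rho,BSC)$, and strictness propagates through each step whenever \eqref{eq::ineq_cond_E0} is strict, because the orderings of Fact~\ref{fact:ordering_bec_bsc} are strict for $\rho\neq 0$. The only delicate point I expect is keeping the \emph{two} sign reversals straight in tandem---the convexity-induced flip of Jensen's inequality and the $\rho<0$ flip of the $E_0$-monotonicity direction---since both must be accounted for correctly for the final ordering to come out as claimed; the genuine analytic content is already carried by Lemma~\ref{lem:ftilde}, so no new estimate is needed here.
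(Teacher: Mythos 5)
Your proposal is correct and follows essentially the same route as the paper's own proof: introduce $BEC^*$ and $BSC^*$ matching $E_0(\rho,W)$ so the denominators in \eqref{eq::R_canonical} agree, apply the two-sided Jensen inequality using the \emph{convexity} of $\tilde f_\rho$ from Lemma \ref{lem:ftilde} for $\rho\in(-1,0]$ (which reverses the roles of the BEC and the BSC relative to Lemma \ref{lem::R_extr_equal_rho_1}), and then pass from the starred channels to the given ones via the monotonicity in Lemma \ref{lem:ordering_bec_bsc}, with strictness propagating as in Part 1. You even handle the two sign reversals (Jensen direction and the $\rho<0$ direction of the $E_0$ ordering) exactly as the paper implicitly does, so no gap remains.
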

\begin{proof}
Let $BEC^*$ and $BSC^*$ be as defined in the proof of Lemma \ref{lem::R_extr_equal_rho_1}. Once again, the equality condition in \eqref{eq::eq_cond_E0} implies  the denominator in \eqref{eq::R_denom} is the same for the three channels. Then, the inequalities
\begin{equation}
 R(\rho, BEC^*) \leq R(\rho, W) \leq R(\rho, BSC^*)
\end{equation}
follow using the convexity of the function $\tilde{f}_{\rho}(t)$ in $t$ when $\rho\in(-1, 0]$, which was shown in Lemma \ref{lem:ftilde}, and applying Jensen's inequalities. 
Finally, since $E_0(\rho, BSC) \leq E_0(\rho, BSC^*)$ and $E_0(\rho, BEC^*) \leq E_0(\rho, BEC)$, we know by Fact 2 that these BSCs and BECs are ordered by degradation, and we conclude by Lemma \ref{lem:ordering_bec_bsc} that 
we have $R(\rho, BSC^*) \leq R(\rho, BSC)$ and $R(\rho, BEC) \leq R(\rho, BEC^*)$, for $\rho \in (-1, 0]$. From this \eqref{eq::R_extr_equal_rho_2} follows. The claim about the strictness of the inequalities can be proved similarly as in the proof of Lemma \ref{lem::R_extr_equal_rho_1}.
\end{proof}

Now, we are ready to prove the theorem.

\begin{proof}[Proof of Theorem \ref{thm:Extremality_Results}]

We will first prove the claims for $\rho_1\in(-1, 0)\cup(0, \infty)$, leaving the case $\rho_1 = 0$ to the last. In fact, we will show that the results proved for $\rho_1\in(-1, 0)\cup(0, \infty)$ will immediately extend to $\rho_1 = 0$ by the continuity of $E_0$ in its arguments. 

We start by proving the inequalities \eqref{eq:P1_R} and \eqref{eq:P1_E} in Part 1 for the case $\rho_1\in(0, 3]$. By Lemma \ref{lem::R_extr_equal_rho_1}, we know that \eqref{eq:P1_R} holds for $\rho_2 = \rho_1$. So, we only need to prove the theorem for $\rho_2\in(0, 3]$ such that
$\rho_2>\rho_1$. 
By the continuity of $E_0(\rho, BEC)$ and $E_0(\rho, BSC)$ in the channels' erasure and crossover probabilities, respectively, it suffices to show that 
\begin{equation}
E_0(\rho_1, BSC) < E_0(\rho_1, W) < E_0(\rho_1, BEC) 
\end{equation}
implies 
\begin{equation}\label{eq:ineq1}
 E_0(\rho_2, BSC) < E_0(\rho_2, W) < E_0(\rho_2, BEC).
\end{equation}
Then, Lemma \ref{lem::R_extr_equal_rho_1} will imply
\begin{equation}
 R(\rho_2, BSC) < R(\rho_2, W) < R(\rho_2, BEC).
\end{equation}

We define $D(\rho) = E_0(\rho, W) - E_0(\rho, BEC)$. Let $D'(\rho)$ denotes the first derivative of $D(\rho)$ with respect to $\rho$.
Noting that $R(\rho, W) = \displaystyle\frac{\partial}{\partial\rho}E_0(\rho, W)$, the inequality in \eqref{eq:ineq1} is implied by the following statement: 
\begin{equation}
 D(\rho_1) < 0 \quad \hbox{and by Lemma \ref{lem::R_extr_equal_rho_1}} \quad (D(\rho) < 0 \Rightarrow D'(\rho) < 0) \quad \Rightarrow \quad D(\rho_2) < 0.
\end{equation}
But this is true by elementary considerations on differential equations. Indeed, suppose to the contrary that
\begin{equation}
 D(\rho_1) < 0, \quad \hbox{and} \quad (D(\rho) < 0 \Rightarrow D'(\rho) < 0), \quad \hbox{but} \quad D(\rho_2) \geq 0.
\end{equation}
Then, there exists $\rho_1 < \rho_3 \leq \rho_2$ such that $D(\rho) < 0$, for $\forall\rho\in[\rho_1, \rho_3)$, and $D(\rho_3) = 0$.
But then there exists $\rho_1 < \rho_4 < \rho_3$ such that
\begin{equation}
D'(\rho_4) = \displaystyle\frac{D(\rho_3) - D(\rho_1)}{\rho_3 - \rho_1} > 0, 
\end{equation}
and $D(\rho_4) < 0$, contradicting the assumption.   

The inequality for the BSC can be obtained similarly by letting $D(\rho) = E_0(\rho, BSC) - E_0(\rho, W)$ and applying the above argument once more.  

We continue with the proof of the inequalities in \eqref{eq:P2_R} and \eqref{eq:P2_E} in Part 2 for the case $\rho_1\in(-1, 0)$. The proof follows along the same lines of the previous part. By Lemma \ref{lem::R_extr_equal_rho_2}, we know that the inequalities in \eqref{eq:P2_R} hold for $\rho_2 = \rho_1$. So, we only need to prove the theorem for $\rho_2 < \rho_1$. By the continuity of $E_0(\rho, BEC)$ and $E_0(\rho, BSC)$ in the channels' erasure and crossover probabilities, respectively, , it suffices to show that 
\begin{equation*}
E_0(\rho_1, BSC) < E_0(\rho_1, W) < E_0(\rho_1, BEC) 
\end{equation*}
implies 
\begin{equation*}
 E_0(\rho_2, BSC) < E_0(\rho_2, W) < E_0(\rho_2, BEC).
\end{equation*}
Then, Lemma \ref{lem::R_extr_equal_rho_2} will imply 
\begin{equation*}
 R(\rho_2, BEC) < R(\rho_2, W) < R(\rho_2, BSC).
\end{equation*}
We define $D(\rho) = E_0(\rho, W) - E_0(\rho, BEC)$.
Noting that $R(\rho) = \displaystyle\frac{\partial}{\partial\rho}E_0(\rho)$, the corollary is implied by the following statement: 
\begin{equation*}
 D(\rho_1) < 0 \quad \hbox{and by Lemma \ref{lem::R_extr_equal_rho_2}} \quad (D(\rho) < 0 \Rightarrow D'(\rho) > 0) \quad \Rightarrow \quad D(\rho_2) < 0.
\end{equation*}
But this is true by an analogous reasoning as before.

The inequality for the BSC can be obtained similarly by letting $D(\rho) = E_0(\rho, BSC) - E_0(\rho, W)$ and applying the above argument once more.
This concludes the proof of Part 2.

For Part 3, we will only do the proof of \eqref{eq:P3_E1} for the case $\rho_1\in(-1, 0)$ and $\rho_2 \geq 0$ as all the other claims can be proved in the same way using the convexity properties of the function $f_{\rho_1,\rho_2}(t)$ discussed in Lemma \ref{lem:f_concavity}. 

Let $T = g(\rho_1, Z)$. We know that the condition in \eqref{eq::Equal_E0_thm1} is equivalent to 
\begin{equation}
\Expt\left[g(\rho_1, Z_{BEC})\right] \leq \Expt\left[g(\rho_1, Z)\right] \leq g(\rho_1, z_{BSC}).
\end{equation}
Define the BEC $BEC^*$ and the BSC $BSC^*$ through the equality
\begin{equation}
 \Expt\left[g(\rho_1, Z)\right] = \Expt\left[g(\rho_1, Z_{BEC^*})\right] = g(\rho_1, z_{BSC^*}). 
\end{equation}
As by Lemma \ref{lem:f_concavity} we know the function $f_{\rho_1,\rho_2}(t)$ is concave in $t$ when $\rho_1\in(-1, 0]$ and $\rho_2\geq 0$,
we can apply the two sides of Jensen's inequality to obtain
\begin{equation}
\Expt\left[f_{\rho_1,\rho_2}(g(\rho_1, Z_{BEC^*}))\right] \leq \Expt\left[f_{\rho_1,\rho_2}(T)\right] \leq f_{\rho_1,\rho_2}(g(\rho_1, Z_{BSC^*})),
\end{equation}
which is equivalent to 
\begin{equation}
\Expt\left[g(\rho_2, Z_{BEC^*})\right] \leq \Expt\left[g(\rho_2, Z)\right] \leq g(\rho_2, z_{BSC^*}).
\end{equation}

To get the claimed inequalities in \eqref{eq:P3_E1}, we simply need to use the ordering argument based on Fact 2 for the two BECs and the two BSCs. As we have illustrated this argument before in the proof of Lemma \ref{lem::R_extr_equal_rho_1}, we do not repeat it here.

The last step is to prove the theorem for the case $\rho_1=0$. We will only present the proof extension for the inequalities $(b')$ and $(c')$ in Part 1 as the same argument can be used to extend all the remaining results. Moreover, once again by the continuity of $E_0(\rho, BEC)$ in the channels' erasure probability, it suffices to show the results assuming $(a_0')$ in \eqref{eq::Equal_capacity_thm1} holds with strict inequality. 

So, we assume the given channels $W$ and $BEC$ satisfy $I(W) < I(BEC)$. Then,
\begin{equation}
\displaystyle\lim_{\rho\to 0^+} \displaystyle\frac{E_0(\rho, W) - E_0(\rho, BEC)}{\rho}  = I(W)-I(BEC) < 0.
\end{equation}
(We assumed $\rho\to0^+$ for simplicity as the above limit for $\rho\to 0$ is well defined). Hence, for any sufficiently small $\rho>0$, we have
\begin{equation}
E_0(\rho, W) < E_0(\rho, BEC).
\end{equation}
Moreover, we already proved that this implies
\begin{equation}
E_0(\rho_2, W) \leq E_0(\rho_2, BEC).
\end{equation}
for all $\rho_2\in[\rho, 3]$. As $\rho>0$ is arbitrary, we conclude the result should hold for all $\rho_2\in[0, 3]$.

Now, we can carry the proof as follows. First, we let $\epsilon\in[0, 1]$ be the erasure probability of the BEC $BEC_\epsilon$ which satisfies $I(W_{\epsilon}) = I(W)$. Then, we take a sequence of BECs $BEC_{\epsilon_n}$ of erasure probabilities $\epsilon_n\in[0, 1]$ such that the sequence $\epsilon_n$ is increasing to $\epsilon$. In this case, we know that
\begin{equation}
I(W) < I(BEC_{\epsilon_n}).
\end{equation}
By the previous argument, we conclude that for all the channels $BEC_{\epsilon_n}$,
\begin{equation}
E_0(\rho_2, W) \leq E_0(\rho_2, BEC_{\epsilon_n})
\end{equation}
holds for all  $\rho_2\in[0, 3]$. Taking the limit for the sequence $\epsilon_n$, we conclude by continuity that the result also holds for the channel $BEC_\epsilon$, i.e.,
\begin{equation}
E_0(\rho_2, W) \leq E_0(\rho_2, BEC_{\epsilon})
\end{equation}
holds for $\rho_2\in[0, 3]$. As the ordering $E_0(\rho_2, BEC_{\epsilon}) \leq E_0(\rho_2, BEC)$ holds, the inequality $(c')$ in \eqref{eq:P1_E} is proved.  By Lemma \ref{lem::R_extr_equal_rho_1}, the inequality $(b')$ follows. 
\end{proof}

\subsection{Extremality of R\'{e}nyi Entropies}
In this section, we show how the results of Theorem \ref{thm:Extremality_Results} can be translated into extremalities for R\'{e}nyi entropies using the definition given in \eqref{eq:Eo_over_ro}.

Observe that the assumption in \eqref{eq::Equal_E0_thm1} of Theorem \ref{thm:Extremality_Results} can be equivalently stated as
\begin{equation*}
\displaystyle\frac{E_0(\rho_1, BSC)}{\rho_1} \leq  \displaystyle\frac{E_0(\rho_1, W)}{\rho_1}  \leq \displaystyle\frac{E_0(\rho_1, BEC)}{\rho_1},
\end{equation*} 
for $\rho_1 > 0$, and
\begin{equation*}
\displaystyle\frac{E_0(\rho_1, BEC)}{\rho_1} \leq  \displaystyle\frac{E_0(\rho_1, W)}{\rho_1}  \leq \displaystyle\frac{E_0(\rho_1, BSC)}{\rho_1},
\end{equation*} 
for $\rho_1\in(-1, 0)$. Note that by Lemma \ref{lem:ordering_bec_bsc}, while for $\rho_1 > 0$ a worst BEC and a worst BSC has a smaller $E_0$ parameter, for $\rho\in(-1, 0)$ the opposite is true.
Consequently, all the results obtained for the parameter $E_0(\rho, W)$ can be restated in terms of R\'{e}nyi entropies via \eqref{eq:Eo_over_ro}. For the sake of brevity, we will only restate in the next corollary the result given in \eqref{eq:P3_E2} in Part 3 of the theorem in terms of R\'{e}nyi entropies.

\begin{corollary}
Given a binary uniform random variable $X$, among all jointly distibuted random variables $(X, Y)$ of equal R\'{e}nyi equivocation $H_{\alpha}(X \mid Y)$ of order $\alpha\in(0, 1/2]$, the 
R\'{e}nyi equivocation of order $\beta \geq 0$ such that $\beta \geq \alpha$ is maximized when $X$ and $Y$ are coupled by a BEC, and minimized when coupled by a BSC. For $\beta \leq \alpha$ values,
the maximizing and minimizing distributions are reversed.
\end{corollary}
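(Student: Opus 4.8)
The plan is to read the corollary as a pure change of language: everything is a restatement of the $E_0$-extremality already proved in Theorem \ref{thm:Extremality_Results}, transported through the identity \eqref{eq:Eo_over_ro}. First I would introduce the order parameters $\alpha=\frac{1}{1+\rho_1}$ and $\beta=\frac{1}{1+\rho_2}$, so that a conditional R\'enyi entropy of order $\alpha$ is attached to the Gallager parameter $\rho_1$ through the strictly decreasing map $\rho\mapsto\frac{1}{1+\rho}$. Under this map ``fixing the order $\alpha$'' is the same as ``fixing $\rho_1$'', and the inequality $\beta\leq\alpha$ is the same as $\rho_2\geq\rho_1$; keeping this dictionary explicit is what lets the ranges in the corollary line up with the ranges of $\rho_1,\rho_2$ in the theorem.

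Next I would exploit that $X$ is binary and uniform. A one-line computation gives $H_{\alpha}(X)=\log 2$ for every order $\alpha$, so that \eqref{eq:Eo_over_ro} collapses to $H_{\beta}(X\mid Y)=\log 2-\frac{E_0(\rho_2,W)}{\rho_2}$, an affine and strictly \emph{decreasing} function of $\frac{E_0(\rho_2,W)}{\rho_2}$. Two consequences follow immediately. On the hypothesis side, requiring the three couplings $W$, $BEC$, $BSC$ to share the same equivocation $H_{\alpha}(X\mid Y)$ is equivalent to requiring $\frac{E_0(\rho_1,\cdot)}{\rho_1}$ to agree, and since $\rho_1\neq 0$ is fixed this is exactly the equality $E_0(\rho_1,W)=E_0(\rho_1,BEC)=E_0(\rho_1,BSC)$ appearing in the theorem's assumption. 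On the conclusion side, ordering $H_{\beta}(X\mid Y)$ across the three channels is equivalent to the \emph{reverse} ordering of $\frac{E_0(\rho_2,\cdot)}{\rho_2}$.

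I would then invoke the relevant part of Theorem \ref{thm:Extremality_Results} to order $E_0(\rho_2,\cdot)$ on the prescribed $\rho_2$-range and push it back through the identity above. For $\beta\leq\alpha$, i.e.\ $\rho_2\geq\rho_1>0$, the inequality $E_0(\rho_2,BSC)\leq E_0(\rho_2,W)\leq E_0(\rho_2,BEC)$ of \eqref{eq:P3_E2}, after division by $\rho_2>0$ and negation, becomes $H_{\beta}(X\mid Y)_{BEC}\leq H_{\beta}(X\mid Y)_{W}\leq H_{\beta}(X\mid Y)_{BSC}$, so the $BEC$ minimizes and the $BSC$ maximizes the equivocation. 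The complementary regime $\beta\geq\alpha$ is treated identically using the companion ordering \eqref{eq:P3_E3}, which flips the roles of $BEC$ and $BSC$; the degenerate order $\beta=\alpha$ and the boundary $\rho_1=0$ are recovered by the continuity argument already used in the proof of the theorem.

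The main obstacle I anticipate is not conceptual but a matter of sign bookkeeping: the final extremal direction and the placement of the threshold at $\beta=\alpha$ are decided by the composition of three separate monotonicities, namely the order-reversing map $\rho\mapsto\frac{1}{1+\rho}$, the minus sign in $H_{\beta}(X\mid Y)=\log 2-E_0(\rho_2,\cdot)/\rho_2$, and the sign of $\rho_2$ incurred when passing between $E_0(\rho_2,\cdot)$ and $E_0(\rho_2,\cdot)/\rho_2$. Getting these to compose consistently so that ``$BEC$ maximizes'' and ``$BSC$ minimizes'' fall on the correct side of $\beta=\alpha$ is the only delicate point; once the dictionary is fixed, the statement is a direct corollary of the already-established $E_0$ extremality together with $H_{\alpha}(X)=\log 2$.
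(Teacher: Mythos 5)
Your translation layer is exactly the paper's: the dictionary $\alpha=1/(1+\rho_1)$, $\beta=1/(1+\rho_2)$, the collapse of \eqref{eq:Eo_over_ro} to $H_{\beta}(X\mid Y)=\log 2-E_0(\rho_2,W)/\rho_2$ using $H_{\alpha}(X)=\log 2$ for uniform binary $X$, the equivalence of ``equal equivocation at order $\alpha$'' with ``equal $E_0$ at $\rho_1$'', and the divide-by-$\rho_2$-and-negate bookkeeping. Your treatment of the $\beta\leq\alpha$ regime via \eqref{eq:P3_E2} is also the paper's own step. The problem is the other regime.

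The genuine gap is that your two citations cannot coexist: \eqref{eq:P3_E2} has hypothesis $\rho_1\in[0,1]$, i.e.\ $\alpha\geq 1/2$, while \eqref{eq:P3_E3} has hypothesis $\rho_1>1$, i.e.\ $\alpha<1/2$. Since $\alpha$ is fixed once and for all in the corollary, at most one of the two is ever applicable, so under any single reading of the range of $\alpha$ one of your two regimes is left unproven. You named ``making the ranges line up'' as the delicate point but never performed the check; had you done so you would have found that the stated range $\alpha\in(0,1/2]$ maps to $\rho_1\geq 1$, not to $\rho_1\in[0,1]$ (a slip that, in fairness, also appears in the paper's own proof, whose intended reading is $\rho_1\in[0,1]$). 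The paper does not use \eqref{eq:P3_E3} at all: the $\beta\geq\alpha$ direction comes from \eqref{eq:P3_E2} itself, applied in contrapositive through the strictness clause of Theorem \ref{thm:Extremality_Results} --- if at some $\rho_2\in[0,\rho_1)$ one had, say, $E_0(\rho_2,BSC)<E_0(\rho_2,W)$, then the strict form of \eqref{eq:P3_E2} with hypothesis point $\rho_2$ and conclusion point $\rho_1\geq\rho_2$ would give $E_0(\rho_1,BSC)<E_0(\rho_1,W)$, contradicting the assumed equality at $\rho_1$. This is the same reversal trick used to prove Corollary \ref{cor::Extremality_unpublished}, and it is what must replace your appeal to \eqref{eq:P3_E3}. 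Finally, even granting your citation, \eqref{eq:P3_E3} only covers $\rho_2\in[0,1]$, i.e.\ $\beta\in[1/2,1]$; the orders $\beta>1$ correspond to $\rho_2\in(-1,0)$ and require the same contrapositive argument applied to \eqref{eq:P3_E1} (or an appeal to \eqref{eq:P3_E4}), a sub-case your sketch leaves untreated.
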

\begin{proof}
Recall that $\alpha = 1/(1+\rho)$. So for $\alpha\in(0, 1/2]$, we have $\rho\in[0, 1]$. Moreover, $\alpha$ is decreasing with $\rho$. Hence, the inequalities for $\beta \leq \alpha$ and for $\beta \geq \alpha$ follow directly from \eqref{eq:P3_E2} in Part 3 of Theorem \ref{thm:Extremality_Results} using the definition given in equation \eqref{eq:Eo_over_ro} together with the fact that $H_{\alpha}(X) = 1$ under the uniform distribution.  
\end{proof}

\subsection{Graphical Interpretation of the Extremality Results}
\begin{figure}
\centering
\input{fig1}
\caption{Extremality of $E_0(\rho)$ when the channels intersect at $\rho_0\in(-1, 0)$. Dashed line: BEC(0.3) \& Solid line: BSC(0.1102).}

\input{fig2}
\centering
\caption{Extremality of $E_0(\rho)$ when the channels have equal capacity 0.5. Dashed line: BEC(0.5) \& Solid line: BSC(0.1102).}

\centering
\input{fig3}
\caption{Extremality of $E_0(\rho)$ when the channels have equal cut-off rate. Dashed line: BEC(0.626278) \& Solid line: BSC(0.1102).}
\end{figure}

\begin{figure}
\input{fig4}
\centering
\caption{Extremality of $E_0(\rho)$ when the channels have equal $E_0(\rho^{*})$ and equal rate at $\rho^{*}> 1$. Dashed line: BEC(0.6777) \& Solid line: BSC(0.1102).}

\centering
\input{fig5}
\caption{Extremality of $E_0(\rho)$ when the channels intersect at $\rho_0 > 1$. Dashed line: BEC(0.67) \& Solid line: BSC(0.1102).}

\end{figure}
In this section, we provide a graphical interpretation of the theorem and the corollaries through Figures 1 to 5. Suppose that the $E_0$ curves of a given B-DMC, a BEC, and a BSC pass through a given point $(\rho_0, e_0)$, for some $\rho_0 > -1$.

By the results stated in \eqref{eq:P2_E} and \eqref{eq:P3_E1} of Theorem \ref{thm:Extremality_Results}, we know that when $\rho_0\in(-1, 0)$, then these curves do not intersect again except at $\rho = 0$, and the BEC and BSC always remain extremal even though their extremal behaviour get reversed after the intersection points. Figure 1 illustrates this relation. 

A special case where the $E_0$ curves of the BEC and the BSC remain extremal for the entire $\rho > -1$ region, and with no reversal, corresponds to channels of the same capacity; 
as discussed after Corollary \ref{cor::Extremality_same_capacity}, Theorem \ref{thm:Extremality_Results} shows that the $E_0$ curves of these channels are upper bounded by the BEC's curve and lower bounded by the BSC's one. Figure 2 illustrates this relation. 

Another situation where the $E_0$ curves of the BEC and the BSC exhibit extremality for the entire region $\rho > -1$ occurs when $\rho_0\in(0, 1]$; \eqref{eq:P1_E} and \eqref{eq:P3_E2} of Theorem \ref{thm:Extremality_Results} imply the BEC and the BSC will be $E_0$ extremal, one again with the extremalities reversed after the intersections.
Figure 3 illustrates this relation. 

Now, we consider the case when $\rho_0 > 1$. By Theorem \ref{thm:Extremality_Results}, we know the curves only intersect at $\rho = 0$ in the interval $\rho\in(-1, 1]$, and the BEC and the BSC are extremal in $(-1, 0)$ and $(0, 1)$ with reversed extremalities. 
Although the thoerem provides a partial result, it is not clear what happens in the interval $\rho > 1$. It turns out that the BEC and the BSC are no longer extremal for $\rho > 1$ in general. We will show this result by studying the intersection points of the $E_0$ curves of a given BSC with different BECs using Lemma \ref{lem::bsc_bec}.

Suppose a BEC $BEC$ and a BSC $BSC$ satisfy 
\begin{align}
&E_0(\rho^*, BEC) = E_0(\rho^*, BSC), \\
&R(\rho^*, BEC) = R(\rho^*, BSC), 
\end{align}
for a particular $\rho^* > 1$. We know by Lemma \ref{lem::bsc_bec} that this corresponds to the case the $E_0$ curves of these two channels are tangent at $\rho^{*} > 1$ and do not intersect at any other point except $\rho = 0$. Moreover, by Theorem \ref{thm:Extremality_Results}, we know the capacities of the channels are such that $I(BEC) \leq I(BSC)$. Figure 4 illustrates this relation.

Suppose the erasure probability of the BEC channel is increased. By the ordering we discussed in Fact 2, it is not difficult to see that the $E_0$ curves of the BSC and
that BEC will not intersect at any point other than $\rho = 0$. On the other hand, assume instead the erasure probability of the channel is decreased such that the capacity of the new BEC is still smaller than the capacity of the BSC. In this case, as long as the cut-off rate of the BSC is larger than the cut-off rate of the BEC,
the BSC and the new BECs will intersect twice after $\rho = 0$, first in the interval $(1, \rho^{*})$, then after $\rho^{*}$. Figure 5 illustrates this relation.
Once the cut-off rate of the BEC becomes larger than that of the BSC, we are back at the situation where the intersection point falls in the interval $[0, 1]$, and we recover the general extremality result we have already discussed. Then, we can keep decreasing the erasure probability until the BEC and the BSC have the same capacity to recover another special case. Finally, decreasing more the erasure probability, until there is no other intersection anywhere except at $\rho = 0$, will cause the $E_0$ curves of the BSC and the new BECs to intersect in the interval $(-1, 0)$, in which case once more the BSC and the BECs will be $E_0$ extremal for the entire $\rho > -1$ region.

The analysis above shows us that most of the BECs and the BSCs whose $E_0$ curves intersect in the interval $\rho>1$ have two intersection points in that interval. In such a case, the BEC and the BSC are no longer extremal as we do not expect a class of B-DMCs $\mathcal{W}$ which satisfy for all $W\in\mathcal{W}$ the equality
\begin{equation}
E_0(\rho_0, W) = E_0(\rho_0, BEC) = E_0(\rho_0, BSC),
\end{equation} 
for any fixed $\rho_0 > 1$, to intersect a second time at the same point where the BEC and the BSC intersect the second time in the interval $(1, \infty)$.

\section{Conclusions} 
We have described certain extremalities for B-DMCs when the information measure is Gallager's $E_0$ evaluated under the uniform input distribution. These properties yield in straightforward fashion recent results by Fabregas et al.\cite{6034105}, \cite{6377299}, and also extremal properties for the R\'{e}nyi entropies.

Finally, it is worth emphasizing that all the conclusions of the paper are valid for arbitrary binary input channels as long as one evaluates all the quantities under the uniform input distribution. 

\section*{Acknowledgment}
The author would like to thank Emre Telatar for helpful discussions.
This work was supported by Swiss National Science Foundation under grant number 200021-125347/1.

\section*{Appendices}
The Appendices contain four parts. In the first three of them, we prove 
Lemma \ref{lem:g_concavity}, Lemma \ref{lem:ftilde}, and Lemma \ref{lem:f_concavity}, respectively. The final part proves two other lemmas needed in these proofs.

\subsection*{Appendix I}
\begin{proof}[Proof of Lemma \ref{lem:g_concavity}]
Taking the first derivative of \eqref{eq:g} with respect to $z$, we get
\begin{align}
\frac{\partial g(\rho, z)}{\partial z} 
&=  \left( \frac{1}{2}(1 + z)^{\frac{1}{1+\rho}} + \frac{1}{2}(1 - z)^{\frac{1}{1+\rho}} \right)^{\rho}  \left( \frac{1}{2}(1 + z)^{\frac{-\rho}{1+\rho}} - \frac{1}{2}(1 - z)^{\frac{-\rho}{1+\rho}} \right) \nonumber \\
&= \underbrace{\left(\frac{1}{2}\right)^{1+\rho} \left(1 + \left( \frac{1 - z}{1 + z}\right) ^{\frac{1}{1+\rho}}\right)^{\rho}}_{\geq 0} \left( 1 - \left( \frac{1 - z}{1 + z}\right) ^{\frac{-\rho}{1+\rho}}\right) \label{eq::dgu}.
\end{align}
As we have 
\begin{equation*}
\displaystyle\frac{1 - z}{1 + z} \leq 1,  
\end{equation*}
for $\forall z\in[0, 1]$, the monotonicity claims follow by noting that when $\rho\in(-\infty, -1)\cup[0, \infty)$:
\begin{equation*}
\frac{\rho}{1+\rho} \geq 0 \quad \Rightarrow \quad \left( 1 - \left( \frac{1 - z}{1 + z}\right) ^{\frac{-\rho}{1+\rho}}\right) \leq 0 \quad \Rightarrow \quad \frac{\partial g(\rho, z)}{\partial z} \leq 0,
\end{equation*}
and when $\rho\in(-1, 0]$:
\begin{equation*}
 \frac{\rho}{1+\rho} \leq 0 \quad \Rightarrow \quad \left( 1 - \left( \frac{1 - z}{1 + z}\right) ^{\frac{-\rho}{1+\rho}}\right) \geq 0 \quad \Rightarrow \quad \frac{\partial g(\rho, z)}{\partial z} \geq 0. 
\end{equation*}

Taking the second derivative with respect to $z$, we get
\begin{equation}
 \frac{\partial^{2} g(\rho, z)}{\partial z^{2}} = - \frac{\rho}{1+\rho} \underbrace{\left( 1 - z^{2}\right)^{\frac{1}{1+\rho}-2} \left(  \frac{1}{2}(1 + z)^{\frac{1}{1+\rho}} + \frac{1}{2}(1 - z)^{\frac{1}{1+\rho}}  \right) ^{-1+\rho}}_{\geq 0}. \nonumber
\end{equation}
The convexity claims follow once again by inspecting the sign of $\displaystyle\frac{\rho}{1+\rho}$ in different intervals, i.e. when $\rho\in(-\infty, -1)\cup[0, \infty)$:
\begin{equation*}
\frac{\rho}{1+\rho} \geq 0 \quad \Rightarrow \quad \frac{\partial^{2} g(\rho, z)}{\partial z^{2}} \leq 0, 
\end{equation*}
and when $\rho\in(-1, 0]$:
\begin{equation*}
 \frac{\rho}{1+\rho} \leq 0 \quad \Rightarrow \quad \frac{\partial^{2} g(\rho, z)}{\partial z^{2}} \geq 0. 
\end{equation*}
\end{proof}

\subsection*{Appendix II} 
\begin{proof}[Proof of Lemma \ref{lem:ftilde}]
We begin by introducing some definitions to simplify notations. Let 
\begin{equation}
g'(\rho, z) = \displaystyle\frac{\partial g(\rho, z)}{\partial z}.
\end{equation}
We define
\begin{align}
 \label{eq:h} h(z) &= \frac{1 - z}{1 + z},  \\
 \label{eq:alpha}\alpha(\rho, z) &= (1 +  h(z)^{\frac{1}{1+\rho}})^{\rho}, \\
 \label{eq:beta}\beta(\rho, z) &= (1 - h(z)^{\frac{-\rho}{1+\rho}}),
\end{align}
for $z\in[0, 1]$, $\rho\in\mathbf{R}\setminus\{-1\}$. By equation \eqref{eq::dgu} in Lemma \ref{lem:g_concavity}, we have
\begin{equation}
 g'(\rho, z)=  \left(\frac{1}{2}\right)^{1+\rho} \alpha(\rho, z) \beta(\rho, z). 
\end{equation}

Taking the first derivative of $\tilde{f}_{\rho_1,\rho_2}(t)$ with respect to $t$, we obtain 
\begin{align}
\frac{\partial \tilde{f}_{\rho_1,\rho_2}(t)}{\partial t} &= \frac{\partial}{\partial t} \displaystyle\frac{\partial}{\partial\rho_2} g(\rho_2, g^{-1}(\rho_1, t)) \\
&= \frac{\partial}{\partial \rho_2} \displaystyle\frac{\partial}{\partial t} g(\rho_2, g^{-1}(\rho_1, t)) \\
&= \frac{\partial}{\partial \rho_2} \frac{g'(\rho_2, g^{-1}(\rho_1, t))}{g'(\rho_1, g^{-1}(\rho_1, t))}. 
\end{align}
Let $z = g^{-1}(\rho_1, t)$. As $g(\rho, z)$ is a monotone function in $z$ by Lemma \ref{lem:g_concavity} in Appendix I, so is $z = g^{-1}(\rho, t)$ in $t$. 
Hence, we can check the convexity of $\tilde{f}_{\rho_1,\rho_2}(t)$ with respect to $t$ from the monotonicity with respect to $z$ of the following expression:
\begin{align}
\frac{\partial}{\partial \rho_2} \frac{g'(\rho_2, z)}{g'(\rho_1, z)} &= \frac{\partial}{\partial \rho_2}  2^{\rho_1-\rho_2}  \frac{\alpha(\rho_2, z) \beta(\rho_2, z)}{\alpha(\rho_1, z) \beta(\rho_1, z)} \nonumber \\
\label{eq::mono}&= \frac{ 2^{-\rho_2} \alpha(\rho_2, z) \beta(\rho_2, z)}{ 2^{-\rho_1} \alpha(\rho_1, z) \beta(\rho_1, z)} \left( \frac{ \partial 2^{-\rho_2} \alpha(\rho_2, z)/\partial \rho_2}{ 2^{-\rho_2} \alpha(\rho_2, z)}
 + \frac{ \partial \beta(\rho_2, z )/\partial \rho_2}{ \beta(\rho_2, z)}\right)  
\end{align}
where
\begin{align}
 \displaystyle\frac{\partial 2^{-\rho_2} \alpha(\rho_2, z)}{\partial \rho_2} &=  \frac{\partial}{\partial \rho_2}  \left( \frac{1}{2} + \frac{1}{2}h(z)^{\frac{1}{1+\rho_2}}\right) ^{\rho_2} \\
&=  \left( \frac{1}{2} + \frac{1}{2}h(z)^{\frac{1}{1+\rho_2}}\right) ^{\rho_2} \left( \log{\left( \frac{1}{2} + \frac{1}{2}h(z)^{\frac{1}{1+\rho_2}}\right) } + \rho_2 \frac{\frac{1}{2} h(z)^{\frac{1}{1+\rho_2}} \frac{-1}{(1+\rho_2)^2} \log{h(z)}}{\frac{1}{2} + \frac{1}{2}h(z)^{\frac{1}{1+\rho_2}}}\right) \\ 
& = 2^{-\rho_2} \alpha(\rho_2, z) \left( \log{\left( \frac{1}{2} + \frac{1}{2}h(z)^{\frac{1}{1+\rho_2}}\right) } - \frac{\rho_2 h(z)^{\frac{1}{1+\rho_2}}\log{h(z)}}{\left(1+\rho_2\right)^2 \left(1+h(z)^{\frac{1}{1+\rho_2}}\right)}\right),  
\end{align}
and
\begin{align}
 \displaystyle\frac{\partial \beta(\rho_2, z)}{\partial\rho_2}  &=  \frac{\partial}{\partial\rho_2} \left( 1 - h(z)^{\frac{-\rho_2}{1+\rho_2}}\right)  \\
&= \frac{1}{(1+\rho_2)^2} h(z)^{\frac{-\rho_2}{1+\rho_2}} \log{h(z)}.
\end{align}

Hence, the expression inside the parenthesis in \eqref{eq::mono} equals
\begin{align}
& \log{\left( \frac{1}{2} + \frac{1}{2}h(z)^{\frac{1}{1+\rho_2}}\right) } - \frac{\rho_2 h(z)^{\frac{1}{1+\rho_2}}\log{h(z)} }{\left(1+\rho_2\right)^2 \left(1+h(z)^{\frac{1}{1+\rho_2}}\right)} 
+ \frac{ h(z)^{\frac{-\rho_2}{1+\rho_2}} \log{h(z)}}{ (1+\rho_2)^2  \left( 1 - h(z)^{\frac{-\rho_2}{1+\rho_2}}\right)}  \\
=& \log{\left( \frac{1}{2} + \frac{1}{2}h(z)^{\frac{1}{1+\rho_2}}\right) } - \frac{\rho_2 h(z)^{\frac{1}{1+\rho_2}}\log{h(z)} }{\left(1+\rho_2\right)^2 \left(1+h(z)^{\frac{1}{1+\rho_2}}\right)} 
+ \frac{ \log{h(z)}}{ (1+\rho_2)^2  \left(h(z)^{\frac{\rho_2}{1+\rho_2}} - 1\right)}  \\
\end{align}

To simplify derivations we define 
\begin{align}
\label{eq::phi}\Phi(k, \rho_1, \rho_2) &= \frac{\left( \frac{1}{2} + \frac{1}{2}k^{\frac{1}{1+\rho_2}}\right) ^{\rho_2} \left( 1 - k^{\frac{-\rho_2}{1+\rho_2}}\right)
}{\left( \frac{1}{2} + \frac{1}{2}k^{\frac{1}{1+\rho_1}}\right) ^{\rho_1} \left( 1 - k^{\frac{-\rho_1}{1+\rho_1}}\right)}  \\
\label{eq::psi} \Psi(k, \rho_2) &= \log{\left( \frac{1}{2} + \frac{1}{2}k^{\frac{1}{1+\rho_2}}\right) } +\frac{\log{k}}{\left(1+\rho_2\right)^2} \left(-\frac{\rho_2 k^{\frac{1}{1+\rho_2}}}{1+k^{\frac{1}{1+\rho_2}}} 
+ \frac{1}{k^{\frac{\rho_2}{1+\rho_2}} - 1}\right)  \\
&= \log{\left( \frac{1}{2} + \frac{1}{2}k^{\frac{1}{1+\rho_2}}\right) } + \frac{\left( 1+k^{\frac{1}{1+\rho_2}} - \rho_2 \left( k - k^{\frac{1}{1+\rho_2}}\right)\right) \log{k}}{ (1+\rho_2)^2 \hspace{5mm} \gamma(k, \rho_2)} \nonumber
\end{align}
where
\begin{equation}\label{eq:gamma}
 \gamma(k, \rho_2) =  \left( 1+k^{\frac{1}{1+\rho_2}}\right) \left( k^{\frac{\rho_2}{1+\rho_2}} - 1\right). 
\end{equation}
Then, equation \eqref{eq::mono} equals to the product 
\begin{equation}
\frac{\partial}{\partial \rho_2} \frac{g'(\rho_2, z)}{g'(\rho_1, z)} = \Phi(h(z), \rho_1, \rho_2) \Psi(h(z), \rho_2).
\end{equation}
Let $k = h(z)\in[0, 1]$. As $k = h(z)$ is decreasing in $z$, to check the monotonicity of the above expression with respect to $z$,
we can equivalently check the monotonicity with respect to $k$ of the following expression:
\begin{equation}\label{eq::prod} 
\Phi(k, \rho_1, \rho_2) \Psi(k, \rho_2). 
\end{equation}

Taking the derivative with respect to $k$ gives
\begin{align}\label{eq:exp1}
 \frac{\partial \Phi(k, \rho_1, \rho_2) \Psi(k, \rho_2)}{\partial k} &= \Phi'(k, \rho_1, \rho_2) \Psi(k, \rho_2) + \Phi(k, \rho_1, \rho_2) \Psi'(k, \rho_2) \nonumber \\
&= \Phi(k, \rho_1, \rho_2) \Psi(k, \rho_2) \left( \frac{\partial \log{\Phi(k, \rho_1, \rho_2)}}{\partial k} + \frac{\Psi'(k, \rho_2)}{\Psi(k, \rho_2)}\right)
\end{align}
where $\Phi'(k, \rho_1, \rho_2) = \displaystyle\frac{\partial \Phi(k, \rho_1, \rho_2)}{\partial k}$, and $\Psi'(k, \rho) = \displaystyle\frac{\partial \Psi(k, \rho)}{\partial k}$.

Now, we derive the expressions in Equation \eqref{eq:exp1}:
\begin{align}
\log{\Phi(k, \rho_1, \rho_2)} &= \hspace{2mm} \rho_2 \log{\left( \frac{1}{2} + \frac{1}{2}k^{\frac{1}{1+\rho_2}}\right) } + \log{\left(  1 - k^{\frac{-\rho_2}{1+\rho_2}} \right) } \nonumber \\
&{} \hspace{2mm} - \rho_1 \log{\left( \frac{1}{2} + \frac{1}{2}k^{\frac{1}{1+\rho_1}}\right) } - \log{\left(  1 - k^{\frac{-\rho_1}{1+\rho_1}} \right) } \nonumber \\
\nonumber\\
\frac{\partial \log{\Phi(k, \rho_1, \rho_2)}}{\partial k} &= \frac{\rho_2}{1+\rho_2} \frac{k^{\frac{-\rho_2}{1+\rho_2}}}{1+k^{\frac{1}{1+\rho_2}}} + \frac{\rho_2}{1+\rho_2} \frac{k^{\frac{-\rho_2}{1+\rho_2}-1}}{1-k^{\frac{-\rho_2}{1+\rho_2}}} - \frac{\rho_1}{1+\rho_1} \frac{k^{\frac{-\rho_1}{1+\rho_1}}}{1+k^{\frac{1}{1+\rho_1}}} - \frac{\rho_1}{1+\rho_1} \frac{k^{\frac{-\rho_1}{1+\rho_1}-1}}{1-k^{\frac{-\rho_1}{1+\rho_1}}} \nonumber \\
&= \frac{\rho_2}{1+\rho_2} \frac{1+k}{k \left( 1+k^{\frac{1}{1+\rho_2}}\right)  \left(k^{\frac{\rho_2}{1+\rho_2}} -1\right)} - \frac{\rho_1}{1+\rho_1} \frac{1+k}{k \left( 1+k^{\frac{1}{1+\rho_1}}\right)  \left(k^{\frac{\rho_1}{1+\rho_1}}-1 \right)} \nonumber \\
& = F(k, \rho_2) - F(k, \rho_1) \nonumber 
\end{align}
where
\begin{equation}\label{eq:F}
F(k, \rho) = \frac{\rho}{1+\rho} \frac{1+k}{k} \frac{1}{\gamma(k, \rho)},
\end{equation}
and \\
\begin{align}
\Psi'(k, \rho_2) &= \displaystyle\frac{\partial}{\partial k} \left(\log{\left( \frac{1}{2} + \frac{1}{2}k^{\frac{1}{1+\rho_2}}\right) } +\frac{\log{k}}{\left(1+\rho_2\right)^2} \left(-\frac{\rho_2 k^{\frac{1}{1+\rho_2}}}{1+k^{\frac{1}{1+\rho_2}}} 
+ \frac{1}{k^{\frac{\rho_2}{1+\rho_2}} - 1}\right) \right)\nonumber\\
&= \hspace{3mm} \frac{k^{-\frac{\rho_2}{1+\rho_2}}}{(1+\rho_2)(1+k^{\frac{1}{1+\rho_2}})} + \frac{1}{\left( 1+\rho_2 \right)^2 k}  \left( -\rho_2 \frac{k^{\frac{1}{1+\rho_2}}}{1+k^{\frac{1}{1+\rho_2}}} + \frac{1}{k^{\frac{\rho_2}{1+\rho_2}} - 1}\right)  \nonumber \\
&\hspace{3mm} + \frac{\log{k}}{\left( 1+\rho_2 \right)^2} \left(-\frac{\rho_2 k^{-\frac{\rho_2}{1+\rho_2}}}{(1+\rho_2)\left( 1+k^{\frac{1}{1+\rho_2}}\right)^2} - \frac{\rho_2 k^{-\frac{1}{1+\rho_2}}}{(1+\rho_2) \left( k^{\frac{\rho_2}{1+\rho_2}} - 1 \right)^2} \right) \nonumber \\
&= \hspace{3mm} \frac{k+1}{\left( 1+\rho_2\right)^2  \hspace{2mm} k \hspace{2mm} \gamma(k, \rho_2)} - \frac{\rho_2 \left( k+1\right) \left( k^{\frac{\rho_2}{1+\rho_2}} + k^{\frac{1}{1+\rho_2}}\right) \log{k}}{ \left( 1+\rho_2\right)^3 \hspace{2mm} k \hspace{2mm}  \gamma^2(k,\rho_2)} \nonumber \\
\label{eq::psi_prime}&= \hspace{3mm} \frac{k+1}{\left( 1+\rho_2\right)^2  \hspace{2mm} k \hspace{2mm} \gamma(k, \rho_2)^2} \left(\gamma(k, \rho_2)  -  \left( k^{\frac{\rho_2}{1+\rho_2}} + k^{\frac{1}{1+\rho_2}}\right) \log{k^{\frac{\rho_2}{1+\rho_2}}}\right)
\end{align}
where $\gamma(k, \rho)$ is defined in Equation \eqref{eq:gamma}.

To summarize the steps so far, we have shown that the second derivative of $\tilde{f}_{\rho_1,\rho_2}(t)$ with respect to $t$ is given by
\begin{align}
 \displaystyle\frac{\partial^2 \tilde{f}_{\rho_1,\rho_2}(t)}{\partial t^2} &=  \displaystyle\frac{\partial}{\partial t} 
 \frac{\partial}{\partial \rho_2} \frac{g'(\rho_2, g^{-1}(\rho_1, t))}{g'(\rho_1, g^{-1}(\rho_1, t))} \\
&= \displaystyle\frac{\partial}{\partial z} \left(\frac{\partial}{\partial \rho_2} \frac{g'(\rho_2, z)}{g'(\rho_1, z)}\right) \displaystyle\frac{\partial z}{\partial t} \\
&=\displaystyle\frac{\partial \Phi(k, \rho_1, \rho_2) \Psi(k, \rho_2)}{\partial k}\displaystyle\frac{\partial k}{\partial z}\displaystyle\frac{\partial z}{\partial t}
\end{align}
where $z = g^{-1}(\rho_1, t)$, $k=h(z)$ with $h(z)$ defined in \eqref{eq:h}, $\Phi(k, \rho_1, \rho_2)$ given by \eqref{eq::phi}, and $\Psi(k, \rho_2)$ given by \eqref{eq::psi}.

We first prove the claims of the lemma for $\rho_1 = \rho_2 = \rho$. Coming back to Equation \eqref{eq:exp1}, 
\begin{align}
\frac{\partial \Phi(k, \rho, \rho) \Psi(k, \rho)}{\partial k}  &= \Psi'(k, \rho)
\end{align}
as $\Phi(k, \rho, \rho) = 1$, and $\displaystyle\frac{\partial \log{\Phi(k, \rho, \rho)}}{\partial k} = 0$. 
Hence to prove the convexity claims, we need to investigate the sign of $\Psi'(k, \rho)$ we derived in Equation \eqref{eq::psi_prime}. 

Note that the factor in front of the paranthesis in Equation \eqref{eq::psi_prime} is always positive for $k\in[0, 1]$, $\rho_2\in \mathbf{R}\setminus\{-1\}$, and
the term inside the paranthesis equals the function $m(k, \rho_2)$ defined in Lemma \ref{lem::sign_m} in Appendix IV.
So the sign of $\Psi'(k, \rho_2)$ is determined by the sign of $m(k, \rho_2)$. By Lemma \ref{lem::sign_m} , we have
\begin{align}
 &\Psi'(k, \rho_2)\geq 0, \quad \forall\rho_2 < -1 \\
 &\Psi'(k, \rho_2)\leq 0, \quad \forall\rho_2\in(-1, 0), \\
 &\Psi'(k, 0) = 0, \\
 &\Psi'(k, \rho_2)\leq 0, \quad \forall\rho_2\in(0, \rho^{*}(k)), \\
 &\Psi'(k, \rho^{*}(k)) = 0, \\
 &\Psi'(k, \rho_2)\geq 0, \quad \forall\rho_2\geq \rho^{*}(k).
\end{align}
where $\rho^{*}(k)\geq 3$ is a constant which depends on $k\in[0, 1]$. 

As $k$ is decreasing in $z$, which is non-increasing in $t$ when $\rho\geq 0$ by Lemma \ref{lem:g_concavity}, we have
\begin{equation}
  \displaystyle\frac{\partial^2 \tilde{f}_{\rho}(t)}{\partial t^2} 
= \underbrace{\Psi'(k, \rho)}_{\leq 0} \underbrace{\displaystyle\frac{\partial k}{\partial z}}_{< 0}\underbrace{\displaystyle\frac{\partial z}{\partial t}}_{\leq 0} \leq 0
\end{equation}
for $\rho\in[0, \rho^{*}(k)]$, and
\begin{equation}
  \displaystyle\frac{\partial^2 \tilde{f}_{\rho}(t)}{\partial t^2} 
= \underbrace{\Psi'(k, \rho)}_{\geq 0} \underbrace{\displaystyle\frac{\partial k}{\partial z}}_{< 0}\underbrace{\displaystyle\frac{\partial z}{\partial t}}_{\leq 0} \geq 0
\end{equation}
for $\rho\geq\rho^{*}(k)$. Hence, the function $\tilde{f}_{\rho}(t)$ is concave in $t$ when $\rho\in(0, 3]$ as claimed.

On the other hand, we know by Lemma \ref{lem:g_concavity} that $z$ is non-decreasing in $t$ when $\rho\in(-1, 0)$. 
Hence, the function $\tilde{f}_{\rho}(t)$ is convex in $t$ whenever $\rho\in(-1, 0)$.

Finally, when $\rho < -1$, $z$ is non-increasing in $t$  by Lemma \ref{lem:g_concavity}, so that $\tilde{f}_{\rho}(t)$ is convex in $t$. 

To prove the last claim of the lemma concerned with the case $\rho_1, \rho_2\in(0, 1]$ such that $\rho_1 < \rho_2$, we need to determine the sign of $\Psi(k, \rho_2)$.
Note that, $\Psi'(k, \rho)\leq 0$ for $\rho\in(0, 3]$ implies 
\begin{equation}
\Psi(k, \rho) \geq \displaystyle\lim_{k \to 1} \Psi(1, \rho) = \frac{2}{\left(1+\rho \right)^2}  \displaystyle\lim_{k \to 1} \frac{\log{k}}{\gamma(k, \rho)} = \frac{1}{\rho\left( 1+\rho \right) } \geq 0 \nonumber \\
\end{equation}
since
\begin{equation}
\displaystyle\lim_{k \to 1} \frac{\log{k}}{\gamma(k, \rho)} = \frac{0}{0} =  \displaystyle\lim_{k \to 1} \frac{\partial \log{k}/ \partial k}{\partial \gamma(k, \rho)/\partial k} = \displaystyle\lim_{k \to 1}  \frac{k + \rho k}{k \left(k + \rho k - k^{\frac{1}{1+\rho}} + g k^{\frac{\rho}{1+\rho}}\right) } = \frac{1+\rho}{2\rho}. \nonumber 
\end{equation}
As a result, $\Psi(k, \rho_2) \geq 0$ whenever $\rho_2\in(0, 3]$.

Recall that we are interested in the sign of the following expression
\begin{equation}
  \frac{\partial \Phi(k, \rho_1, \rho_2) \Psi(k, \rho_2)}{\partial k}  = \Phi(k, \rho_1, \rho_2) \Psi(k, \rho_2) \left(\frac{\Psi'(k, \rho_2)}{\Psi(k, \rho_2)} + F(k, \rho_2) - F(k, \rho_1)\right).  
\end{equation}
Lemma \ref{lem::F_monotonicity} in Appendix IV shows that the function $F(k, \rho)$ is decreasing in $\rho\in(0, 1]$.
Moreover, we have just shown $\displaystyle\frac{\Psi'(k, \rho_2)}{\Psi(k, \rho_2)} \leq 0$, for $\rho_2\in(0, 3]$.
Consequently, when $\rho_1, \rho_2\in(0, 1]$ such that $\rho_1 \leq \rho_2$ 
\begin{equation}
\frac{\Psi'(k, \rho_2)}{\Psi(k, \rho_2)} + F(k, \rho_2) - F(k, \rho_1) \leq 0
\end{equation}
holds, and the product $\Phi(k, \rho_1, \rho_2) \Psi(k, \rho_2)$ is non-increasing in $k$.
As $k$ is decreasing in $z$, which is in turn non-increasing in $t$ when $\rho\geq 0$ by Lemma \ref{lem:g_concavity},
the expression in equation \eqref{eq::mono}, is decreasing in $z$ whenever $\rho_1, \rho_2\in(0, 1]$ such that $\rho_1 \leq \rho_2$.
In this case, 
\begin{equation}
 \displaystyle\frac{\partial^2 \tilde{f}_{\rho_1,\rho_2}(t)}{\partial t^2} =  \underbrace{\frac{\partial \Phi(k, \rho_1, \rho_2) \Psi(k, \rho_2)}{\partial k}}_{\leq 0}
\underbrace{\displaystyle\frac{\partial k}{\partial z}}_{< 0} \underbrace{\displaystyle\frac{\partial z}{\partial t}}_{\leq 0} \leq 0,
\end{equation}
 whence the function $\tilde{f}_{\rho_1,\rho_2}(t)$ is concave in $t$ as claimed. 
\end{proof}

\subsection*{Appendix III}
\begin{proof}[Proof of Lemma \ref{lem:f_concavity}]
Taking the first derivative of $f_{\rho_1,\rho_2}(t)$ with respect to $t$, we obtain 
\begin{align}
\frac{\partial f_{\rho_1,\rho_2}(t)}{\partial t} &= \frac{\partial g(\rho_2, g^{-1}(\rho_1, t))}{\partial t}  \\
&=  \frac{g'(\rho_2, g^{-1}(\rho_1, t))}{g'(\rho_1, g^{-1}(\rho_1, t))}. \label{eq::mono}
\end{align}
Let $z = g^{-1}(\rho_1, t)$. As $g(\rho, z)$ is a monotone function in $z$ by Lemma \ref{lem:g_concavity}, so is $z = g^{-1}(\rho, t)$ in $t$. 
Hence we can check the convexity of $f_{\rho_1,\rho_2}(t)$ with respect to $t$, from the monotonicity with respect to $z$ of the following expression:
\begin{equation}
\frac{g'(\rho_2, z)}{g'(\rho_1, z)}. 
\end{equation}

Taking the derivative with respect to $z$, we get
\begin{align}
\frac{\partial}{\partial z} \frac{g'(\rho_2, z)}{g'(\rho_1, z)} &= \frac{\partial}{\partial z}  2^{\rho_1-\rho_2}  \frac{\alpha(\rho_2, z) \beta(\rho_2, z)}{\alpha(\rho_1, z) \beta(\rho_1, z)} \nonumber \\
&= \frac{ 2^{-\rho_2} \alpha(\rho_2, z) \beta(\rho_2, z)}{ 2^{-\rho_1} \alpha(\rho_1, z) \beta(\rho_1, z)} \left( \ell(\rho_2, z) -  \ell(\rho_1, z) \right)  \nonumber 
\end{align}
where
\begin{equation}
 \ell(\rho, z) = \frac{ \partial \alpha(\rho, z)/\partial z}{\alpha(\rho, z)}  + \frac{ \partial \beta(\rho, z )/\partial z}{ \beta(\rho, z)}.
\end{equation}

One can easily check that the function 
\begin{equation}
 \alpha(\rho, z) \geq 0,
\end{equation}
for any $\rho > -1$, and while the function 
\begin{equation}
\beta(\rho, z) \geq 0,
\end{equation}
for $\rho\in(-1, 0)$, we have 
\begin{equation}
\beta(\rho, z) \leq 0,
\end{equation}
for $\rho\geq 0$.

Moreover, we claim that 
\begin{equation}
\ell(\rho_2, z) -  \ell(\rho_1, z)\geq 0
\end{equation}
when $\rho_1\in(-1, 0)$, and $\rho_2 \geq 0$, or when $\rho_1\in(0, 1]$, and $\rho_2 \geq \rho_1$,
and that 
\begin{equation}
\ell(\rho_2, z) -  \ell(\rho_1, z)\leq 0
\end{equation}
when $\rho_1 > 1$, and $\rho_2\in(-1, 0)$, or when $\rho_1 > 1$, and $\rho_2\in(0, 1]$.

Therefore, if $\rho_1\in(-1, 0]$, and $\rho_2 \geq 0$, we have 
\begin{equation}
\frac{\partial}{\partial z} \frac{g'(\rho_2, z)}{g'(\rho_1, z)} = \underbrace{\frac{ 2^{-\rho_2} \alpha(\rho_2, z) \beta(\rho_2, z)}{ 2^{-\rho_1} \alpha(\rho_1, z) \beta(\rho_1, z)}}_{\leq 0} 
\underbrace{\left( \ell(\rho_2, z) - \ell(\rho_1, z) \right)}_{\geq 0} \leq 0, 
\end{equation}
and if $\rho_1\in[0, 1]$, and $\rho_2 \geq \rho_1$, we have
\begin{equation}
\frac{\partial}{\partial z} \frac{g'(\rho_2, z)}{g'(\rho_1, z)} = \underbrace{\frac{ 2^{-\rho_2} \alpha(\rho_2, z) \beta(\rho_2, z)}{ 2^{-\rho_1} \alpha(\rho_1, z) \beta(\rho_1, z)}}_{\geq 0} 
\underbrace{\left( \ell(\rho_2, z) - \ell(\rho_1, z) \right)}_{\geq 0} \geq 0. 
\end{equation}

On the other hand, if $\rho_1 > 1$, and $\rho_2\in(-1, 0)$, we have 
\begin{equation}
\frac{\partial}{\partial z} \frac{g'(\rho_2, z)}{g'(\rho_1, z)} = \underbrace{\frac{ 2^{-\rho_2} \alpha(\rho_2, z) \beta(\rho_2, z)}{ 2^{-\rho_1} \alpha(\rho_1, z) \beta(\rho_1, z)}}_{\leq 0} 
\underbrace{\left( \ell(\rho_2, z) - \ell(\rho_1, z) \right)}_{\leq 0} \geq 0, 
\end{equation}
and if $\rho_1 > 1$, and $\rho_2\in(0, 1]$, we have
\begin{equation}
\frac{\partial}{\partial z} \frac{g'(\rho_2, z)}{g'(\rho_1, z)} = \underbrace{\frac{ 2^{-\rho_2} \alpha(\rho_2, z) \beta(\rho_2, z)}{ 2^{-\rho_1} \alpha(\rho_1, z) \beta(\rho_1, z)}}_{\geq 0} 
\underbrace{\left( \ell(\rho_2, z) - \ell(\rho_1, z) \right)}_{\leq 0} \leq 0. 
\end{equation}

Recall that we are interested in the sign of the second derivative of $f_{\rho_1,\rho_2}$ with respect to $t$ given by
\begin{align}
\frac{\partial^{2} f_{\rho_1,\rho_2}(t)}{\partial t^2} &= \frac{\partial}{\partial t} \frac{\partial g(\rho_2, g^{-1}(\rho_1, t))}{\partial t}  \\
&= \frac{\partial}{\partial z}\frac{g'(\rho_2, z)}{g'(\rho_1, z)}  \frac{\partial z}{\partial t}.
\end{align}
As $z$ is non-decreasing in $t$ for $\rho_1\in(-1, 0)$, and non-increasing for $\rho_1 \geq 0$ by Lemma \ref{lem:g_concavity}, 
the function $f_{\rho_1,\rho_2}(t)$ is concave in $t$ when $\rho_1\in(-1, 0]$, and $\rho_2 \geq 0$, or when $\rho_1\in[0, 1]$, and $\rho_2 \geq \rho_1$, or when
$\rho_1 > 1$, and $\rho_2\in(-1, 0)$, and convex when $\rho_1 > 1$, and $\rho_2\in(0, 1]$.

Now, we prove the claim. For that purpose, we show that the function $\ell(\rho. z)$ is non-decreasing in $\rho$ for the interval $\rho\in(-1, 3)$, 
and $\displaystyle\frac{\partial \ell(\rho, z)}{\partial\rho}$ changes sign only once after $\rho \geq 3$. As
\begin{equation}
 \lim_{\rho\to 1} \ell(\rho, z) = \lim_{\rho\to \infty} \ell(\rho, z) = \displaystyle\frac{1}{z - z^3}
\end{equation}
holds, we conclude that
\begin{align}
 &\ell(\rho, z) \geq \ell(1, z), \quad \hbox{when  } \rho\geq 1, \\
 &\ell(\rho, z) \leq \ell(1, z), \quad \hbox{when  } \rho\in(-1, 1].
\end{align}
The above inequalities ensure $\ell(\rho_2, z) -  \ell(\rho_1, z)\geq 0$ when $\rho_1\in(-1, 0)$, and $\rho_2 \geq 0$, or when $\rho_1\in[0, 1]$, and $\rho_2 \geq \rho_1$. 
Similarly, the previous arguments ensure that $\ell(\rho_2, z) -  \ell(\rho_1, z)\leq 0$ when $\rho_1 > 1$, and $\rho_2\in(-1, 1]$.

Note that
\begin{equation}
 \ell(\rho, z) = \displaystyle\frac{\partial}{\partial z} \left(\log\left(2^{-\rho}\alpha(\rho, z)\right)  + \log\left(\beta(\rho, z)\right) \right).
\end{equation}
Hence,
\begin{align}
 \displaystyle\frac{\partial \ell(\rho, z) }{\partial \rho}&= \displaystyle\frac{\partial}{\partial z} 
\left(\frac{ \partial 2^{-\rho} \alpha(\rho, z)/\partial \rho}{2^{-\rho}\alpha(\rho, z)}  + \frac{ \partial \beta(\rho, z )/\partial \rho}{ \beta(\rho, z)} \right) \\
&= \displaystyle\frac{\partial \Psi(k, \rho)}{\partial k} \displaystyle\frac{\partial k}{\partial z} \\
&= \Psi'(k, \rho) \displaystyle\frac{\partial k}{\partial z}
\end{align}
where $k = h(z)$ is defined in Equation \eqref{eq:h}, and $\Psi'(k, \rho)$ is defined in Equation \eqref{eq::psi_prime}. 
Luckily, we have already investigated the sign of $\Psi'(k, \rho)$ in the proof of Lemma \ref{lem:ftilde} we previously stated.
Indeed, we have shown that $\Psi'(k, \rho) \leq 0$, for $\rho\in(-1, 3)$, 
and the function changes sign only once after $\rho \geq 3$. As $k$ is decreasing in $z$, the sign of $\displaystyle\frac{\partial \ell(\rho, z) }{\partial \rho}$ is exactly the opposite of $\Psi'(k, \rho)$. 
This concludes the proof.
\end{proof}

\subsection*{Appendix IV}
\begin{lemma}\label{lem::sign_m}
For $k\in[0,1]$, we define
\begin{equation}\label{eq::sign}
 m(k, \rho) = -1 + k - k^{\frac{1}{1+\rho}} + k^{\frac{\rho}{1+\rho}} - \left( k^{\frac{\rho}{1+\rho}} + k^{\frac{1}{1+\rho}}\right) \log{k^{\frac{\rho}{1+\rho}}}. 
\end{equation}            
Then, for $\forall k\in[0,1]$, we have
\begin{align*}
 &m(k, \rho)\geq 0, \quad \forall\rho < -1, \\
 &m(k, \rho) \leq 0, \quad \forall\rho\in(-1, 0), \\
 &m(k, 0) = 0. 
\end{align*}      
Moreover, $\exists \hspace{2mm} \rho^{*}(k) \geq 3$ which depends on $k$ such that:
\begin{align*}
 &m(k, \rho) \leq 0, \quad \forall\rho\in(-1, \rho^{*}(k)), \\
 &m(k, \rho^{*}) = 0, \\
 &m(k, \rho) \geq 0, \quad \forall\rho\in(\rho^{*}, \infty).
\end{align*}   
                                                                   
\end{lemma}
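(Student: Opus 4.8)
The plan is to fix $k\in(0,1)$ (the endpoints being handled by continuity, since $m(1,\rho)\equiv 0$ and $k=0$ follows as a limit) and study $m(k,\rho)$ as a function of $\rho$ alone. Writing $c=\log k<0$, $\phi=\tfrac{1}{1+\rho}$, $x=k^{1/(1+\rho)}$ and $y=k^{\rho/(1+\rho)}$, I would first record the bookkeeping facts $xy=k$ (hence $xy$ is constant in $\rho$), $\log y=c(1-\phi)$, and $m=-1+k-x+y-(x+y)\log y$. Differentiating with $x'=-c\phi^{2}x$ and $y'=c\phi^{2}y$, the terms collapse, and I expect the clean identity
\[
\frac{\partial m}{\partial\rho}=-c^{2}\phi^{2}\,(1-\phi)\,(y-x).
\]
This cancellation is the engine of the argument; everything after it is sign-chasing.

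Since $c^{2}\phi^{2}>0$, the sign of $\partial m/\partial\rho$ is opposite to that of $(1-\phi)(y-x)$, where $1-\phi=\tfrac{\rho}{1+\rho}$ and, because $k<1$, the quantity $y-x$ has the sign of $\tfrac{1}{1+\rho}-\tfrac{\rho}{1+\rho}$ (a larger exponent gives a smaller power). Evaluating these signs on the four intervals yields that $m$ is increasing on $(-\infty,-1)$ and on $(-1,0)$, decreasing on $(0,1)$, and increasing on $(1,\infty)$, while $\partial m/\partial\rho$ vanishes exactly at $\rho=0$ and $\rho=1$. I would then record the elementary values: direct substitution gives $m(k,0)=0$, and both limits $\rho\to\pm\infty$ equal $M(k):=-2+2k-(1+k)\log k$, which is strictly positive on $(0,1)$ (check $M(1)=M'(1)=0$ and $M''(k)=(1-k)/k^{2}>0$).

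Assembling the monotonicity with these values gives the stated pattern: on $(-\infty,-1)$, $m$ increases with infimum $M(k)>0$, so $m>0$; on $(-1,0)$, $m$ increases up to $m(k,0)=0$, so $m<0$; on $(0,1)$, $m$ decreases from $0$, so $m<0$; and on $(1,\infty)$, $m$ increases from its minimum $m(k,1)=-1+k-\sqrt{k}\log k<0$ to $M(k)>0$, so it has a unique zero $\rho^{*}(k)>1$, with $m<0$ before and $m>0$ after it. The tangency (double root) at $\rho=0$ is exactly reflected by $\partial m/\partial\rho|_{\rho=0}=0$.

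The remaining and most delicate point is the lower bound $\rho^{*}(k)\ge 3$. Because $m$ is increasing on $(1,\infty)$ with a single zero there, this is equivalent to the one inequality $m(k,3)\le 0$ for every $k\in(0,1)$. Setting $r=k^{1/4}\in(0,1)$ reduces it to showing
\[
P(r):=-1+r^{4}+r^{3}-r-3(r^{3}+r)\log r\le 0 .
\]
I would prove this by a Taylor-type argument at $r=1$: one verifies $P(1)=P'(1)=P''(1)=P'''(1)=P^{(4)}(1)=0$ while $P^{(5)}(r)=18r^{-2}+18r^{-4}>0$. Propagating signs upward from $P^{(5)}>0$ (each $P^{(j)}$ being monotone with a zero at $r=1$) forces $P^{(4)}<0$, then $P'''>0$, $P''<0$, $P'>0$, and finally $P<0$ on $(0,1)$, giving $m(k,3)<0$ and hence $\rho^{*}(k)>3$. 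This iterated-derivative step is where the bulk of the computation lives, and it is also precisely where the conservative constant ``$3$'' enters.
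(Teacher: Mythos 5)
Your proof is correct and takes essentially the same route as the paper's: the same factored expression for $\partial m/\partial\rho$ (which the paper reaches via the substitutions $t=\rho/(1+\rho)$ and $s=-t\log k$ rather than by direct differentiation), the same four-interval monotonicity and boundary-value analysis, and the same reduction of $\rho^{*}(k)\geq 3$ to the single inequality $m(k,3)\leq 0$, settled by iterated differentiation with sign propagation from the endpoint (your substitution $r=k^{1/4}$ merely tidies the paper's corresponding computation in $k$). No gaps worth noting.
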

\begin{proof}
We now follow a series of transformations. Let
\begin{equation}
 t = \frac{\rho}{1+\rho} \nonumber
\end{equation}
Then, \eqref{eq::sign} reduces to 
\begin{equation}
 m\left(k, \frac{t}{1-t}\right) =  -1 + k - k^{1-t} + k^t - (k^t + k^{1-t}) \log{k^t}. \nonumber 
\end{equation}
In addition, let 
\begin{equation}
 s = -t \log{k}. \nonumber
\end{equation}
Then,
\begin{equation}\label{eq:H2}
 m\left(k, \frac{-s}{\log{k} + s}\right) = -1 + k - ke^{s} + e^{-s} + s(e^{-s} + ke^{s}).
\end{equation}
We first note that the function is zero at $s = 0$. Taking the first derivative with respect to $s$, we get
\begin{align*}
 \displaystyle\frac{\partial}{\partial s} m\left(k, \frac{-s}{\log{k} + s}\right) &= -ke^{s} - e^{-s} + e^{-s} + ke^{s} + s(-e^{-s} + ke^{s}) \\
&= s(-e^{-s} + ke^{s}) \\
&= t (k^{t} - k^{1-t}) \log{k}. 
\end{align*}
Hence the function $\displaystyle m\left(k, \frac{-s}{\log{k} + s}\right)$ is non-increasing in $s$ for $t\in[0, 1/2]$, and non-decreasing otherwise. 

Moreover, the derivative of $\displaystyle m\left(k, \frac{t}{1-t}\right)$ with respect to $t$ is given by
\begin{equation*}
 \displaystyle\frac{\partial}{\partial t} m\left(k, \frac{t}{1-t}\right) = \displaystyle\frac{\partial}{\partial s} m\left(k, \frac{-s}{\log{k} + s}\right)  \displaystyle\frac{\partial s}{\partial t} 
\end{equation*}
As $s$ is non-decreasing in $t$, we have shown that $\displaystyle m\left(k, \frac{t}{1-t}\right)$ is non-increasing in $t$ for $t\in[0, 1/2]$, and non-decreasing otherwise. 

Similarly, the derivative of $m(k, \rho)$ with respect to $\rho$ is given by
\begin{equation*}
 \displaystyle\frac{\partial m\left(k, \rho\right)}{\partial \rho} = \displaystyle\frac{\partial}{\partial t} m\left(k, \frac{t}{1-t}\right)  \displaystyle\frac{\partial t}{\partial \rho} 
\end{equation*}
As $t$ is increasing in $\rho$ for the intervals $(-\infty, -1)$, and $(-1, \infty)$, $m(k, \rho)$ will be non-increasing in $\rho$ for $t\in[0, 1/2]$, and non-decreasing otherwise.
We simply need to map this result to the claims of the lemma in terms of the intervals defined by $\rho$.

For the interval $t\in[1, \infty)$, we have $\rho < -1$, and $m(k, \rho)$ is non-decreasing in $\rho$.
Moreover,
\begin{equation*}
 \lim_{\rho\to-\infty} m(k, \rho) = (-1 + k -1 +k) - (k + 1) \log{k}  = -2 (1-k) + (k+1) \log{k} \geq 0
\end{equation*}
where the sign follows by noting that at $k = 1$ the expression evaluates to $0$, and it is non-increasing in $k$ as
\begin{equation*}
 \displaystyle\frac{\partial}{\partial k}\left(-2 (1-k) + (k+1) \log{k}\right)  = 1 - \displaystyle\frac{1}{k} + \log{\displaystyle\frac{1}{k}} \leq 0
\end{equation*}
using $\log{x}\leq x-1$ inequality. This shows $m(k, \rho)\geq 0$ for $\rho < -1$.

For the interval $t\in(-\infty, 0]$, we have $\rho\in(-1, 0]$, and $m(k, \rho)$ is non-decreasing in $\rho$. As we have $m(k, 0) = 0$, we conclude $m(k, \rho) \leq 0$ for $\rho\in(-1, 0)$.

For the interval $t\in[0, 1/2]$, we have $\rho\in[0, 1]$, and $m(k, \rho)$ is non-increasing in $\rho$. As we have $m(k, 0) = 0$, we conclude $m(k, \rho) \leq 0$ for $\rho\in(0, 1]$.

For the interval $t\in[1/2, 1]$, we have $\rho\geq1$, and $m(k, \rho)$ is non-decreasing in $\rho$. As $m(k, 1) \leq 0$, and
\begin{equation*}
 \lim_{\rho\to\infty} m(k, \rho) = (-1 + k -1 +k) - (k + 1) \log{k}  = -2 (1-k) + (k+1) \log{k} \geq 0,
\end{equation*}
the function will eventually cross zero. Now, we prove that the crossing point $\rho^{*}$, i.e. $m(k,  \rho^{*}) = 0$, is such that $\rho^{*}\geq 3$. For that purpose, we only need to show that $m(k, 3)$ is increasing in $k$
because $m(1, 3) = 0$ holds. 

Taking the first derivative with respect to $k$, we get
\begin{equation*}
 \displaystyle\frac{\partial m(k, 3)}{\partial k} = \displaystyle\frac{4 (-1 + k^{3/4}) - 3/4 (1 + 3 \sqrt{k}) \log{k}}{4 k^{3/4}} \geq 0
\end{equation*}
with equality iff $k = 1$. The sign follows by noting that the denominator is positive, the numerator is decreasing in $k$, and is equal to $0$ iff $k = 1$. 
Indeed, taking the first derivative with respect to $k$ of the numerator, we get
\begin{equation*}
 \displaystyle\frac{\partial}{\partial k} \left(4 (-1 + k^{3/4}) - 3/4 (1 + 3 \sqrt{k}) \log{k} \right) = \displaystyle\frac{-3 (2 + 6 \sqrt{k} - 8 k^{3/4} + 3 \sqrt{k} \log{k})}{8 k} \leq 0
\end{equation*}
with equality iff $k = 1$. The sign follows by noting that the denominator is positive, the numerator is increasing in $k$, and is equal to $0$ iff $k = 1$.
To see this, once more we take the first derivative with respect to $k$ of the numerator. Then, we get
\begin{equation*}
 \displaystyle\frac{\partial}{\partial k} \left(-3 (2 + 6 \sqrt{k} - 8 k^{3/4} + 3 \sqrt{k} \log{k})\log{k}\right)= \displaystyle\frac{-9 (4 - 4 k^{1/4} + \log{k})}{2\sqrt{k}} \geq 0
\end{equation*}
with equality iff $k = 1$. The sign follows by noting that the denominator is positive, the numerator is decreasing in $k$, and is equal to $0$ iff $k = 1$. 
To show this, we need to take the first derivative with respect to $k$ of the numerator one last time. Doing so, we get
\begin{equation*}
 \displaystyle\frac{\partial}{\partial k} \left(-9 (4 - 4 k^{1/4} + \log{k})\right) = \displaystyle\frac{9(-1 + k^{1/4})}{k} \leq 0
\end{equation*}
for $k\in[0, 1]$, and with equality iff $k = 1$. This concludes the proof of the lemma.
\end{proof}

\begin{lemma}[\cite{private:Emre_Telatar}]\label{lem::F_monotonicity}
The function $F(k, \rho)$ defined in \eqref{eq:F} is a decreasing function in $\rho\in[0, 1]$. 
\end{lemma}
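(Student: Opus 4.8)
The plan is to collapse the claim to the elementary inequality $\sinh a \ge a$ by reusing the change of variables introduced in Lemma~\ref{lem::sign_m}. First I would set $t = \rho/(1+\rho)$, which maps $\rho\in[0,1]$ monotonically (increasingly) onto $t\in[0,1/2]$; since $t$ is increasing in $\rho$, proving that $F$ is decreasing in $\rho$ is equivalent to proving it is decreasing in $t$. With $\gamma(k,\rho)=(1+k^{1-t})(k^{t}-1)$ from \eqref{eq:gamma}, the function in \eqref{eq:F} reads $F=\frac{1+k}{k}\,\frac{t}{(1+k^{1-t})(k^{t}-1)}$. For $k\in(0,1)$ the trailing factor is negative, so I would write $F=-\frac{1+k}{k}\,H(t)$ with $H(t)=\frac{t}{(1+k^{1-t})(1-k^{t})}>0$; then $F$ decreasing in $t$ is the same as $H$ increasing in $t$. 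The degenerate endpoint $k=1$, where both $F$ and the substitution become $0/0$, I would dispatch at the end by continuity from $k<1$.

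Next I would take the logarithmic derivative of $H$. Writing $L=\log(1/k)>0$, a direct differentiation gives
\[
\frac{d}{dt}\log H(t) = \frac{1}{t} - L\left(\frac{k^{1-t}}{1+k^{1-t}} + \frac{k^{t}}{1-k^{t}}\right).
\]
Introducing $a=tL$ and $b=(1-t)L$, so that $k^{t}=e^{-a}$ and $k^{1-t}=e^{-b}$, the two bracketed fractions simplify to $\frac{1}{e^{a}-1}$ and $\frac{1}{e^{b}+1}$, and using $\frac{1}{t}=L/a$ one obtains
\[
\frac{d}{dt}\log H(t) = L\left(\frac{1}{a} - \frac{1}{e^{a}-1} - \frac{1}{e^{b}+1}\right).
\]
Since $L>0$, it remains to show $\frac{1}{a}\ge \frac{1}{e^{a}-1}+\frac{1}{e^{b}+1}$ whenever $0<a\le b$; note that the restriction $t\in[0,1/2]$ is exactly what guarantees $a=tL\le(1-t)L=b$.

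Finally comes the key simplification. Because $b\mapsto 1/(e^{b}+1)$ is decreasing, the right-hand side is largest at $b=a$, so it suffices to handle $b=a$, where
\[
\frac{1}{e^{a}-1}+\frac{1}{e^{a}+1}=\frac{2e^{a}}{e^{2a}-1}=\frac{1}{\sinh a}.
\]
The required inequality then becomes $\frac{1}{a}\ge\frac{1}{\sinh a}$, i.e.\ $\sinh a\ge a$, which holds for all $a\ge 0$. Hence $\frac{d}{dt}\log H\ge 0$, so $H$ is increasing, $F$ is decreasing in $\rho$ on $[0,1]$, and the $k=1$ case follows by continuity.

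I expect the main obstacle to be expository rather than analytic: once the substitution $(a,b)=(tL,(1-t)L)$ is in place the problem literally reduces to $\sinh a\ge a$, so the real work is organizing the change of variables cleanly and justifying the reduction $b\downarrow a$ from the monotonicity in $b$ together with the constraint $a\le b$ supplied by $t\le 1/2$. A little extra care is needed only at the boundary $k\to 1$, which the continuity argument absorbs.
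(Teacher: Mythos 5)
Your proof is correct, and it takes a genuinely more economical route than the paper's. The overlap is in the setup: both arguments pass to $t=\rho/(1+\rho)$ and then to exponential variables (your $a=tL$ is exactly the paper's $s=-t\log k$), and both ultimately rest on $\sinh x\ge x$. But the organization differs. The paper works with $1/H$ and proves that $\log\bigl(\tfrac{1-e^{-s}}{s}(a+e^{s})\bigr)$ is \emph{convex} in $s$ on all of $[0,\infty)$ — a second-derivative bound that already invokes $\sinh(s/2)\ge s/2$ — and then uses convexity to reduce the monotonicity claim on $[0,\tfrac12\log(1/k)]$ to the sign of the first derivative at the right endpoint $s=\tfrac12\log(1/k)$; that endpoint check, $-\tfrac{1}{\log\sqrt a}+\tfrac{2\sqrt a}{a-1}\le 0$, is $\sinh s\ge s$ in disguise and is verified there through the auxiliary inequality $\tfrac{b^2-1}{2b}\ge\log b$, itself proved by differentiation. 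You instead differentiate $\log H$ once, symmetrize with $a=tL$, $b=(1-t)L$ so that the restriction $t\le 1/2$ becomes precisely $a\le b$, and settle the sign pointwise: monotonicity of $1/(e^{b}+1)$ in $b$ collapses the worst case to $b=a$, where
\begin{equation*}
\frac{1}{e^{a}-1}+\frac{1}{e^{a}+1}=\frac{2e^{a}}{e^{2a}-1}=\frac{1}{\sinh a}\le\frac{1}{a}.
\end{equation*}
This eliminates the second-derivative computation, the convexity-plus-endpoint machinery, and the auxiliary inequality, at no loss of generality; what the paper's longer route buys (a global statement that $1/H$ has a single minimum in $s$ over all $s\ge 0$) is not needed for the lemma. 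One cosmetic correction: at $k=1$ the quantity $F$ is not a $0/0$ form but diverges, since $\gamma(1,\rho)=0$ while $\tfrac{\rho}{1+\rho}\tfrac{1+k}{k}\neq 0$; so the boundary case is vacuous rather than recoverable by continuity. The paper implicitly restricts to $k\in(0,1)$ there as well, so this does not affect the substance of your argument.
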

\begin{proof}
For convenience, we define the function $H(k, \rho) = -\displaystyle\frac{k}{1+k} F(k, \rho)$ as 
\begin{equation}\label{eq:H}
H(k, \rho) = \frac{\rho}{1+\rho} \frac{1}{\left( 1+k^{\frac{1}{1+\rho}}\right) \left(1 - k^{\frac{\rho}{1+\rho}}\right)}  \geq 0
\end{equation}
where $k\in[0, 1]$. We note that instead of $F(k, \rho)$, we can also check the monotonicity of $H(k, \rho)$ with respect to $\rho$.\\\\
We now follow a series of transformations. Let
\begin{equation}
 t = \frac{\rho}{1+\rho} \hspace{5mm} \hbox{for} \hspace{3mm} t\in[0, \frac{1}{2}]. \nonumber
\end{equation}
Then, \eqref{eq:H} reduces to 
\begin{equation}
 H(k, \frac{t}{1-t}) = \frac{t}{\left( 1 - k^{t}\right)  \left( 1 + k^{1-t}\right)}. \nonumber 
\end{equation}
In addition, let 
\begin{equation}
 s = -t \ln{k} \hspace{5mm} \hbox{for} \hspace{3mm} s\in[0, \frac{1}{2}\ln{\frac{1}{k}}]. \nonumber
\end{equation}
Then,
\begin{equation}\label{eq:H2}
 H(k, \frac{-s}{\log{k} + s}) = \frac{1}{\log{\frac{\displaystyle 1}{\displaystyle k}}} \hspace{2mm} \frac{s}{1 - e^{-s}} \hspace{2mm} \frac{1}{1 + k e^{s}}.
\end{equation}
We note that the first fraction in \eqref{eq:H2} can be treated as a constant and we ignore it. We define the variable $a = \frac{1}{k} \geq 1$. 
For simplicity, we consider the function
\begin{equation}
\frac{1}{H(k, \frac{-s}{\log{k} + s})} = \underbrace{\frac{\ln{a}}{a}}_{constant} \frac{1 - e^{-s}}{s}\left( a + e^{s}\right). \nonumber
\end{equation}
We first show that $\ln{\left( \frac{\displaystyle 1 - e^{-s}}{\displaystyle s}\left( a + e^{s}\right)\right)}$ is a convex function for all $s\geq 0$. 
Taking the first derivative with respect to $s$, we obtain
\begin{equation}\label{eq:appEexp2}
 \frac{\partial}{\partial s} \left( -\ln{s} + \ln{\left(\frac{1}{ 1 - e^{-s}}\right)} + \ln{\left( \frac{e^{s}}{a + e^{s}}\right)}\right) = -\frac{1}{s} + \frac{e^{s}}{a + e^{s}} + \frac{1}{e^{s} - 1}. 
\end{equation}
Taking the second derivative in $s$, we get
\begin{align}
 &\frac{\partial^{2}}{\partial s^{2}} \left(-\ln{s} + \ln{\left(\frac{1}{ 1 - e^{-s}}\right)} + \ln{\left( \frac{e^{s}}{a + e^{s}}\right)} \right) \nonumber \\ 
=\hspace{2mm} &\frac{1}{s^{2}} + \frac{a e^{s}}{(a + e^{s})^{2}} - \frac{e^{s}}{(e^{s} - 1)^{2}} \nonumber \\
\geq\hspace{2mm} & \frac{1}{s^{2}} - \frac{e^{s}}{(e^{s}-1)^2} \nonumber \\
=\hspace{2mm} &\frac{1}{s^{2}} - \left( \frac{1}{e^{\frac{s}{2}} + e^{\frac{-s}{2}}}\right)^{2} \nonumber \\
=\hspace{2mm} &\frac{1}{s^{2}} - \frac{1}{\left(2 \sinh{\frac{\displaystyle s}{\displaystyle 2}}\right)^{2}} \nonumber \\
\geq\hspace{2mm} &0 \nonumber
\end{align}
where the last inequality follows from $\sinh{x}\geq x$, for $x\geq 0$.
We proved that $\ln{\left( \frac{\displaystyle 1 - e^{-s}}{\displaystyle s}\left( a + e^{s}\right)\right)}$ is a convex function for all $s\geq 0$. 
Therefore the function has only one minimum, and to decide whether the expression is decreasing in $s\in[0, \frac{1}{2}\ln{a}]$, 
it is sufficient to evaluate \eqref{eq:appEexp2} at $s = \frac{1}{2}\ln{a}$.
\begin{align}
 &\frac{\partial}{\partial s} \left(  -\ln{s} + \ln{\left(\frac{1}{ 1 - e^{-s}}\right)} + \ln{\left( \frac{e^{s}}{a + e^{s}}\right)} \right)\Bigr\rvert_{s = \frac{1}{2}\ln{a}}  \nonumber \\
= &-\frac{1}{\ln{\sqrt{a}}} + \frac{\sqrt{a}}{a + \sqrt{a}} + \frac{1}{\sqrt{a} - 1} \nonumber \\
= &-\frac{1}{\ln{\sqrt{a}}} + \frac{2\sqrt{a}}{a-1} \nonumber\\
\leq 0 \nonumber
\end{align}
since for $b = \sqrt{a} \geq 1$, we can show that
\begin{equation}\label{eq:appEexp3}
\frac{b^{2}-1}{2b} - \ln{b} \geq 0.
\end{equation}
Taking the first derivative of \eqref{eq:appEexp3} with respect to $b$, we get
\begin{equation}
 \frac{\partial}{\partial b}  \frac{b^{2}-1}{2b} - \ln{b} = \frac{1}{2} + \frac{1}{2b^{2}} - \frac{1}{b} = \frac{(b - 1)^{2}}{2 b^{2}} \geq 0. \nonumber
\end{equation}
Therefore, we proved that for each $k\in[0, 1]$ the function $\frac{1}{H(k, \frac{-s}{\log{k} + s})}$ is decreasing in $s$. 
By definition, the variable $t$ is increasing in $\rho$, and $s = -t \ln{k}$ is also increasing in $t$ for a given $k$.
As a consequence, the function $F(k, \rho) = -\displaystyle\frac{1+k}{k}H(k, \rho)$ is decreasing in $\rho$.
\end{proof}

\bibliographystyle{IEEEtran}
\bibliography{ref}
\end{document}